\documentclass[12pt,a4paper]{article}
\usepackage{amsfonts,color,comment}
\usepackage{amsthm, thmtools, amsmath, amscd, amssymb}
\usepackage[utf8]{inputenc}
\usepackage{t1enc}
\usepackage[mathscr]{eucal}
\usepackage{indentfirst}
\usepackage{graphicx}
\usepackage{graphics}
\usepackage{pict2e}
\usepackage{epic}
\usepackage{url}
\usepackage{epstopdf}
\usepackage[english]{babel}
\usepackage{hyperref}
\usepackage{pgf,tikz}
\usetikzlibrary{arrows}
\definecolor{zzttqq}{rgb}{0,0,0.1}
\definecolor{zzttaa}{rgb}{0.6,0.2,0}
\usepackage{hhline}
\usepackage{enumitem}
\usepackage{tikz}
\usetikzlibrary{fit}
\usetikzlibrary{calc, shapes, positioning}
\usetikzlibrary{patterns.meta}
\usepackage{mathrsfs,pgf}
\usepackage{amsthm}
\usepackage{soul}
\usepackage{tikz,amsmath,helvet}
\usetikzlibrary{arrows.meta, decorations.pathreplacing}
\newcommand\scalemath[2]{\scalebox{#1}{\mbox{\ensuremath{\displaystyle #2}}}}
\usepackage[margin=2.5cm]{geometry}

\usepackage{fancyhdr}

\theoremstyle{definition}

\theoremstyle{plain}
\newtheorem{Th}{Theorem}
\newtheorem{lemma}[Th]{Lemma}

\theoremstyle{remark}

 \theoremstyle{definition}

\newcommand{\ZZ}{{\mathbb Z}}

\def\ZZ{\mathbb{Z}}

\def\HH{\mathcal{H}}

\def\FF{\mathcal{F}}
\def\AA{\mathcal{A}}
\def\PP{\mathcal{P}}
\def\OO{\mathcal{O}}

\mathcode`l="8000
\begingroup
\makeatletter
\lccode`\~=`\l
\DeclareMathSymbol{\lsb@l}{\mathalpha}{letters}{`l}
\lowercase{\gdef~{\ifnum\the\mathgroup=\m@ne \ell \else \lsb@l \fi}}%
\endgroup

\title{The Six-Dimensional Tarski Problem is $\Omega(\log^3n)$-Hard}

\author{
Krist\'of Z\'olomy\thanks{ELTE Eötvös Loránd University, Budapest. Supported by the EXCELLENCE-24 grant no.~151504 Combinatorics and Geometry of the National Research, Development and Innovation Office.
}
}

\makeatletter

\begin{document}

\thispagestyle{empty}
\date{October 15, 2026}
\maketitle

\begin{abstract}
We study the deterministic query complexity of finding a fixed point of a monotone function $f:[n]^d\to[n]^d$, where querying $x$ returns $f(x)$.
For fixed $d\ge2$, earlier work established a lower bound of $\Omega((\log n)^2)$ and an upper bound of $O_d((\log n)^{\lceil(d+2)/3\rceil})$ for the query complexity.
Very recently, several papers obtained improved lower bounds using AI-assisted proofs.
In this paper, we further improve these lower bounds to $\text{Tarski}(n,d)=\Omega_d((\log n)^{2+\lfloor\log_3(d/2)\rfloor})$.
In particular, we obtain the optimal $\Theta((\log n)^3)$ deterministic query complexity in six dimensions.
\end{abstract}

\section{Introduction}

A function $f:[n]^d\to[n]^d$ is monotone if $x\leq y$ implies $f(x)\leq f(y)$, where inequalities are interpreted coordinatewise.
Tarski's fixed point theorem~\cite{Tar55} guarantees that every such function has a fixed point, that is, a point $x$ with $f(x)=x$.
Finding pure Nash equilibria in supermodular games and solving parity, mean-payoff, and simple stochastic games can all be reduced to finding such a fixed point~\cite{EPRY20}.
We study the deterministic query complexity of this problem: each query at $x$ returns the entire vector $f(x)$, and computation between queries is unrestricted.
We write $\text{Tarski}(n,d)$ for the minimum worst-case number of queries needed to find a fixed point.

Recursive binary search gives an $O_d((\log n)^d)$ upper bound, as shown by Dang, Qi, and Ye~\cite{DQY20}.
Etessami, Papadimitriou, Rubinstein, and Yannakakis~\cite{EPRY20} proved an $\Omega((\log n)^2)$ lower bound in two dimensions, even for randomized algorithms, settling the two-dimensional case.
Fearnley, P{\'a}lv{\"o}lgyi, and Savani~\cite{FPS22} obtained an $O((\log n)^2)$-query algorithm in three dimensions and a decomposition theorem that together yield an $O_d((\log n)^{\lceil 2d/3\rceil})$ upper bound in general.
In 2022, Chen and Li~\cite{CL22} improved this to $O_d((\log n)^{\lceil(d+1)/2\rceil})$.
In 2025, Haslebacher and Lill~\cite{HL26} gave a substantially different optimal three-dimensional algorithm based on the \emph{level sets} of the hypercube.
In March 2026, Chen, Li, and Yannakakis~\cite{CLY26} obtained an optimal, $O((\log n)^2)$-query algorithm in four dimensions and an $O_d((\log n)^{\lceil(d+2)/3\rceil})$ upper bound in general.

In October 2026, Chen, Gao, Li, Rubinstein, and Yannakakis~\cite{CGLRY26} reduced the upper bound to $O(5^d(\log n)^{\lceil\log_2 d\rceil})$ for $d\geq3$, and proved lower bounds whose exponent in $\log n$ also grows logarithmically with $d$, even for randomized algorithms.
Their construction gives an $\Omega((\log n)^3)$ lower bound in ten dimensions.
Independently, Liu and Wang~\cite{LW26} developed a recursive construction giving an $\Omega((\log n)^3)$ deterministic lower bound in nine dimensions and increasingly large logarithmic exponents in higher dimensions.

Li, Ma, Yang, and Zhao~\cite{LMYZ26} subsequently established bootstrapping theorems for quantum query lower bounds, obtaining, in particular, an $\Omega((\log n)^3)$ lower bound in eight dimensions.
All three works~\cite{CGLRY26,LW26,LMYZ26} establish general lower bounds of the form $\Omega_d((\log n)^{\log_4 d+O(1)})$, with differences in the additive constant in the exponent and the computational models covered.
For further background and applications, see~\cite{CGLRY26}.

Our main result is the following.

\begin{Th}\label{thm:main}
For every fixed integer $d\geq2$, the deterministic query complexity
of finding a Tarski fixed point satisfies
\begin{equation*}
        \text{Tarski}(n,d)=\Omega_d\left((\log n)^{2+\lfloor\log_3(d/2)\rfloor}\right).
\end{equation*}

In particular, $\text{Tarski}(n,6)=\Omega((\log n)^3)$,  matching the $O((\log n)^3)$ upper bound of~\cite{CGLRY26} and establishing the optimal deterministic query complexity in this dimension.
\end{Th}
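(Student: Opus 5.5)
The plan is a bootstrapping argument: build instances recursively so that multiplying the dimension by $3$ costs one extra factor of $\log n$. Concretely, the goal is a \emph{composition lemma}. Suppose some adversary forces $T$ queries on monotone instances of dimension $d$, and the hard instances satisfy a strengthened invariant. Then in dimension $3d$ one can force $\Omega(T\cdot\log n)$ queries on instances that satisfy the same invariant. The base case is the $\Omega((\log n)^2)$ lower bound of~\cite{EPRY20} for $d=2$. Iterating the lemma $k=\lfloor\log_3(d/2)\rfloor$ times gives dimension $2\cdot 3^k\le d$ and a bound of $\Omega((\log n)^{2+k})$. Padding with the identity map on the extra coordinates handles the remaining dimensions, since a fixed point of the padded instance restricts to a fixed point of the original. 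The case $d=6$ is the first application of the lemma, and the matching upper bound is quoted from~\cite{CGLRY26}.

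For the composition, I would split $[n]^{3d}$ into three blocks $x=(a,b,c)$ with $a,b,c\in[n]^d$. The first ingredient is an outer one-dimensional binary-search structure of depth $\Theta(\log n)$, encoded monotonically, in the spirit of the 2D lower bound's hidden path. At each outer level the algorithm has to solve an inner $d$-dimensional hard instance. Block $a$ carries the current inner instance. Block $b$ is a copy that lets the solution of the inner instance be \emph{certified} monotonically, by pushing values up or down consistently with comparisons against the inner fixed point. Block $c$ carries the outer search coordinate, which says on which side of the inner fixed point the outer hidden value lies.

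The strengthened invariant is the following: the inner instance must be hard even for an algorithm that is told, after each query, whether the query is componentwise below or above some hidden ``direction'' point. In addition, the instance must admit a monotone extension in which points dominated by the fixed point map up and points dominating it map down. The reason for the factor $3$, rather than the factor $4$ used in earlier works, is that only three copies are needed: one to host the problem, one for monotone ``up-routing'', and one for ``down-routing''. The outer search coordinate is absorbed into these copies rather than spending a fourth block on it.

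The adversary argument proceeds as a potential function. As long as the outer search has $\Omega(\log n)$ unresolved levels, any query reveals information about at most the current inner instance. The adversary simulates the inner adversary there and fixes the inner fixed point only after $\Omega(T)$ queries. Each resolved inner instance halves the outer interval, so the total cost is $\Omega(T\log n)$.

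The main obstacle is verifying monotonicity of the composed function across the three blocks, especially at comparable points that lie in different outer levels or on opposite sides of an inner fixed point. The inner adversary must be able to answer queries lazily while keeping a monotone completion. I would first try to state the invariant as ``every partially revealed instance extends to a monotone function whose fixed point lies in a prescribed box.'' I would then check by case analysis on the sign pattern of $(a-p,b-p,c-q)$ that the glued map stays monotone, where $p$ is the inner fixed point and $q$ is the outer hidden value.
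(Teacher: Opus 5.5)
There is a genuine gap at the quantitative core of your plan. The bound $\Omega(T\log n)$ rests on the claim that each of the $\log n$ outer halvings must be paid for by a \emph{complete} inner solve, and nothing in the plan enforces this.

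\textbf{Why the outer bit is too cheap.} There is no ``current'' inner instance: the algorithm may interleave queries across all outer levels. More importantly, you encode the outer direction as a comparison between the outer hidden value and the inner fixed point. A single comparison bit about a hard instance is typically a $\log$ factor cheaper than locating its fixed point. In the paper's base case, one binary search ($O(\log n)$ queries) locates a local fixed point $p_i$. One further query then reveals $e_{P,j,i}(p_i)=1_{i\le j}$, which answers the comparison $J(f)\le i-1$. Finding the fixed point, by contrast, costs $\Theta(\log^2 n)$. A one-dimensional outer binary search fed by such bits therefore yields only $O(T)$ in total.

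\textbf{Other missing pieces.} If you instead intend the direction to be visible only at the exact inner fixed point, you need a monotone gluing across outer levels in which nothing near that point leaks the direction. That is exactly the construction you defer as ``the main obstacle''. Your description of the three blocks is also inconsistent: block $c$ first carries the outer coordinate, and later the outer coordinate is ``absorbed'' into host, up-routing and down-routing copies. Finally, citing~\cite{EPRY20} for $d=2$ does not supply your strengthened invariant; the base case would have to be re-proved in that form.

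\textbf{How the paper handles this.} The paper's inductive invariant is not ``finding the fixed point is hard''. It is a simulator $\Sigma_k$ that answers evaluation queries while touching the hidden function only through label comparisons $J\le t$. After $Q$ evaluations it has made at most $O(Q/\log^{k+1}n_k)$ such comparisons, and every still-possible label remains realizable. This is the form that composes.

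The tripling is a diagonal chain of separate copies $G_i$ on cubes $C_i$, selected monotonically by $S=A(X,Y)+A(X,Z)+A(Y,Z)$ read modulo $3$. The unique fixed point is $(p_j,p_{j+1},p_{j+2})$, where $j$ is the index at which $J(G_i')+J(G_{i+1}')$ crosses a threshold.

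The extra logarithm comes from a two-dimensional layer, not from a one-dimensional outer search. The labels are set to $J(G_i')=\lambda_i+1$, the sublabels of a hidden herringbone $H$, so that $j$ encodes the fixed point of $H$ and $J(F)=J(H)$. The herringbone simulator answers $Q$ sublabel comparisons using only $O(Q/\log m_k)$ comparisons about $J(H)$. Multiplied by the subsimulators' rate $1/\log^k n_{k-1}$, this gives $1/\log^{k+1}n_k$. Lemma~\ref{lem:simulatorbound} then supplies the final $\log m_k$ factor.
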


We expect our construction to yield a similar lower bound for randomized algorithms, but we do not establish this extension here; our result concerns only deterministic query complexity.
The proof uses a recursive construction starting in two dimensions:
each step triples the dimension and increases the exponent of
$\log n$ by one.

\section{Preliminaries}

Throughout, intervals contain only integers, and $[n]=[1,n]$. Fix an integer $k \geq 0$. 
In our proof, we will discuss $2 \cdot 3^r$-dimensional Tarski instances on cubes of side length $n_r$ with a parameter $m_r$, defined recursively as follows. Pick an integer $m_0$ and set $n_0=m_0(m_0+1)$. For $r\geq1$, define   $m_r=\left\lfloor\sqrt{\frac{m_{r-1}}{8}}\right\rfloor$, and  $n_r=(8m_r^2+4m_r+1)n_{r-1}$.
Choose $m_0$ sufficiently large that $m_r \geq 2$ for every $r \leq k$. For fixed $k$, these sequences satisfy $\log m_k  = \Theta_k(\log n_k)$.

Let $\AA_k$ be any algorithm resolving the $2 \cdot 3^k$-dimensional Tarski search problem.
We will recursively define a family $\FF_k$ and a \emph{simulator} $\Sigma_k$ for every dimension $d=2 \cdot 3^k$ for $k \geq 0$,  such that:

\begin{itemize}
    \item $\FF_k$ is the set of possible hidden functions, its elements are monotone functions $f:[n_k]^d \to [n_k]^d$ with a unique fixed point. To each such function, we assign an integer label $J(f) \in [1, m_k]$, that is uniquely determined by this fixed point.
    \item Any queries of $\AA_k$ about $f$  will be answered by $\Sigma_k$.
    \item $\Sigma_k$ has access to $f$ only through \emph{local queries}, i.e. queries of the form $J(f) \overset{?}{\leq} t$, answered by the oracle $\OO_k$.
    \item At any point of the algorithm, if $\AA_k$ has made  $Q$ queries so far, $\Sigma_k$ will have made at most $O_k\left(\frac{Q}{\log^{k+1} {n_k}}\right)$ local queries.
    \item Any value of $J$ that is consistent with all previous local queries is the label of some $f \in \FF_k$ that is consistent with the answers supplied by $\Sigma_k$ so far. Consequently, any algorithm locating the fixed point of a hidden function must also determine its label.
\end{itemize}

\begin{lemma}\label{lem:simulatorbound}
    If a pair consisting of a  function family and a simulator satisfying the above properties exists for any $0 \leq k' \leq k$, then $\text{Tarski}(n_k,2\cdot 3^k) = \Omega(\log^{k+2} n_k)$.
\end{lemma}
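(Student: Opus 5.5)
This follows from an adversary argument run against the oracle $\OO_k$. The idea is that the simulator turns any Tarski algorithm into a binary search for the label, and binary search for an integer in $[1,m_k]$ needs about $\log m_k$ comparisons.

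Fix an arbitrary deterministic algorithm $\AA_k$ for the $2\cdot 3^k$-dimensional Tarski problem on $[n_k]^d$, and let $Q$ be its worst-case number of queries. We run $\AA_k$ against $\Sigma_k$, and $\OO_k$ answers the local queries of the form $J(f)\overset{?}{\le} t$ adversarially. The adversary maintains the set $S\subseteq[1,m_k]$ of labels consistent with all local answers so far; $S$ is always an interval. On a query $J\overset{?}{\le}t$, it answers so that the surviving part of $S$ is the larger of $S\cap[1,t]$ and $S\cap[t+1,m_k]$. Each local query therefore at most halves $|S|$. Hence after $L$ local queries we have $|S|\ge m_k/2^{L}$, and as long as $L<\log_2 m_k$ we still have $|S|\ge 2$.

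Suppose $\AA_k$ terminates after $Q$ queries and outputs a point $x^*$. At that moment $\Sigma_k$ has made at most $L\le C_k\,Q/\log^{k+1}n_k$ local queries, by the fourth property. If $|S|\ge2$, pick two labels $j\ne j'$ in $S$. By the last property there are functions $f,f'\in\FF_k$ with $J(f)=j$ and $J(f')=j'$, both consistent with every answer $\Sigma_k$ has given to $\AA_k$. Since $\AA_k$ is deterministic, it runs identically on $f$ and $f'$ and outputs the same $x^*$. Each function has a unique fixed point, and that fixed point determines the label, so $f$ and $f'$ have different fixed points. Hence $x^*$ is not the fixed point of at least one of them, and $\AA_k$ fails. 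Correctness therefore forces $|S|=1$ at termination, which gives $L\ge\log_2 m_k$.

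Combining the two bounds gives $C_k\,Q/\log^{k+1}n_k\ge \log_2 m_k$. Since $\log m_k=\Theta_k(\log n_k)$, this yields $Q=\Omega_k(\log^{k+2}n_k)$, as claimed.

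The main obstacle is making the adversary legitimate. The answers of $\Sigma_k$ must be well defined functions of the local answers alone, so that "consistent with answers so far" carries over to a genuine hidden $f\in\FF_k$. The last property is exactly what guarantees this. The hypothesis for all $k'\le k$ is needed only because $\Sigma_k$ is built recursively from $\Sigma_{k-1}$. For the final count, only the properties at level $k$ are used.
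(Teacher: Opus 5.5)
Your proposal is correct and is essentially the paper's adversary argument. Each local comparison is answered so as to keep the larger half of the surviving label interval; the consistency property shows that two surviving labels, whose fixed points differ, would fool the deterministic algorithm; and combining the resulting $\log_2 m_k$ local queries with the query-count property and $\log m_k=\Theta_k(\log n_k)$ gives the bound, with your version merely spelling out the two-function indistinguishability step that the paper states in one line.
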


\begin{proof}
    We maintain an interval $L$ of labels consistent with all local comparison answers, initially $L=[1,m_k]$. We forward every evaluation query of $\AA_k$ to $\Sigma_k$. Whenever $\Sigma_k$ makes a local comparison query, we answer it by retaining the larger of the two resulting subintervals of $L$. Thus each local comparison retains at least half of the remaining labels.

    By the consistency property, each label in $L$ admits a function in $\FF_k$ that is consistent with all evaluation answers so far. Functions with different labels have different unique fixed points, so $\AA_k$ cannot determine the fixed point while $|L|>1$. Hence $\Sigma_k$ must make at least $\log m_k$ local queries. By the query count property, $\AA_k$ must therefore make $\Omega_k(\log m_k\cdot\log^{k+1}n_k)=\Omega_k(\log^{k+2}n_k)$ evaluation queries.
\end{proof}

In the rest of the paper, we will discuss the construction of families $\FF_k$ and the simulators $\Sigma_k$ satisfying the properties defined above. By Lemma \ref{lem:simulatorbound}, this suffices for the proof of Theorem \ref{thm:main}.

Although we construct instances only for side lengths $n_k$, for each fixed $k$ and every sufficiently large $n$ we can choose $m_0$ so that $n_k\leq n$ and $n_k=\Theta_k(n)$; padding these instances to $[n]^{2\cdot3^k}$ therefore yields the same asymptotic lower bound for arbitrary $n$.
For a general fixed dimension $d\geq2$, take $k=\lfloor\log_3(d/2)\rfloor$ and pad the remaining coordinates by setting their output values to $1$.

\section{The two-dimensional base case}

In this section, we will describe the family $\FF_0$ and the construction of $\Sigma_0$. We divide the interval $[1,n_0]$ into $m_0+1$ subintervals $I_i^{(0)} =  [(i-1)m_0+1, im_0]$.
On each subinterval, we define a family of monotone functions, where each member has a unique fixed point $p_i$ (also referred to as a \emph{local fixed point}) and is assigned a bit $e_i \in \{0,1\}$. For each such sequence of fixed point-bit pairs $(p_i,e_i)_{i=1}^{m_0+1}$ such that the sequence $(e_i)$ is nonincreasing, with $e_1=1$ and $e_{m_0+1}=0$, we will define a function $f \in \FF_0$.

Let $ \mathcal{P} = \{P=(p_1, p_2, \dots, p_{m_0+1}) : \forall i \;p_i \in I_i^{(0)}\} $ be the set of possible local fixed point sequences. 
First, we define  functions  $h_{P,j,i}: [0,n_0] \to I_i^{(0)} $ and $e_{P,j,i}: [0,n_0] \to \{0,1\}$ for $i \in [1,m_0+1]$, $j \in [1,m_0]$ and $P \in \mathcal{P}$: 

\begin{equation*}
   (h_{P,j,i}(x), e_{P,j,i}(x) ) = \begin{cases}
    ((i-1)m_0+1,0) & \text{if } x < (i-1)m_0+1, \\
    (x+1,0) & \text{if }  (i-1)m_0+1 \leq x < p_i, \\
    (x, 1_{i \leq j} ) & \text{if }   x = p_i, \\
    (x-1,1) & \text{if }  p_i < x \leq im_0 , \\
    (im_0,1) & \text{if }  im_0 <  x. \\
\end{cases} 
\end{equation*}

For an example, see Figure \ref{fig:base-case-local}.

\begin{figure}[tbp]
    \centering
    \begingroup
\definecolor{baseLocalBlue}{RGB}{35,93,132}
\definecolor{baseLocalOrange}{RGB}{205,125,58}
\begin{tikzpicture}[
  x=1cm,y=1cm,
  font=\small,
  >=Stealth,
  line cap=round,
  line join=round,
  local map/.style={draw=baseLocalBlue,line width=.9pt,-{Stealth[length=1.65mm,width=1.25mm]}},
  value/.style={text=baseLocalBlue,inner sep=0pt},
  every node/.style={fill=none}
]
  \draw[draw=baseLocalBlue!28,line width=.6pt] (0,0)--(9.6,0);

  \foreach \q in {0,1,2,3} {
    \pgfmathsetmacro{\LocalStart}{1.2*\q+.09}
    \pgfmathsetmacro{\LocalEnd}{1.2*(\q+1)-.09}
    \draw[local map]
      (\LocalStart,.05)
      to[out=58,in=122]
      (\LocalEnd,.05);
  }
  \foreach \q in {5,6,7,8} {
    \pgfmathsetmacro{\LocalStart}{1.2*\q-.09}
    \pgfmathsetmacro{\LocalEnd}{1.2*(\q-1)+.09}
    \draw[local map]
      (\LocalStart,.05)
      to[out=122,in=58]
      (\LocalEnd,.05);
  }
  \foreach \q in {0,1,2,3,5,6,7,8}
    \fill[baseLocalBlue] ({1.2*\q},0) circle (2.2pt);
  \fill[baseLocalOrange] (4.8,0) circle (3.1pt);

  \node[anchor=east,text=baseLocalBlue] at (-.55,.05)
    {$h_{P,j,i}(x)$};
  \node[text=baseLocalOrange,anchor=south] at (4.8,.47) {$p_i$};
  \node[anchor=north] at (0,-.17) {$(i-1)m_0+1$};
  \node[anchor=north] at (9.6,-.17) {$im_0$};
  \draw[draw=baseLocalBlue!65,line width=.55pt,
        decorate,decoration={brace,amplitude=4pt}]
    (0,1.03)--(9.6,1.03)
    node[midway,above=6pt,text=black] {$I_i^{(0)}$};

  \node[anchor=east,text=baseLocalBlue] at (-.55,-1.05)
    {$e_{P,j,i}(x)$};
  \foreach \q in {0,1,2,3}
    \node[value] at ({1.2*\q},-1.05) {$0$};
  \node[text=baseLocalOrange,inner sep=0pt] at (4.8,-1.05)
    {$e_{P,j,i}(p_i)$};
  \foreach \q in {5,6,7,8}
    \node[value] at ({1.2*\q},-1.05) {$1$};
\end{tikzpicture}
\endgroup
    \caption{The maps $h_{P,j,i}$ and $e_{P,j,i}$ on $I_i^{(0)}$.
    The map $h_{P,j,i}$ fixes $p_i$ and moves every other point one step
    towards it. Only $e_{P,j,i}(p_i)=\mathbf{1}_{i\leq j}$ depends on $j$.}
    \label{fig:base-case-local}
\end{figure}

For an element $(x,y) \in [1,n_0]^2$, let $x= (x_1-1)m_0 +x_0$ and $y=(y_1-1)m_0+y_0$ for some integers $1 \leq x_1, y_1 \leq m_0+1$ and $1 \leq x_0, y_0 \leq m_0 $. 

For $j \in [1,m_0]$ and $P \in \mathcal{P}$, define $f_{P,j}:[1,n_0]^2 \to [1,n_0]^2$ as follows. We first define a target interval for $x$ and $y$, which will contain their coordinatewise images. The index of the target interval for $x$ is given by a function of $y$: $T_x(y) = \max\{1, y_1 -1+e_{P,j, y_1 }(y) \} $, and similarly for $y$, the target interval index is $T_y(x) = \min\{m_0+1, x_1+ e_{P,j,x_1}(x)\}$. Now, let
\begin{equation*}
    f_{P,j}(x,y)  = \left( h_{P,j,T_x(y)} (x), h_{P,j,T_y(x)} (y) \right),
\end{equation*}

and $\FF_0= \{f_{P,j} : P \in \mathcal{P}, j \in [1,m_0] \}$, with $J(f_{P,j}) = j$. For an example, see Figure \ref{fig:base-case-fixed-point}.

\begin{figure}[tbp]
    \centering
    \begingroup
\definecolor{baseBlue}{RGB}{35,93,132}
\definecolor{baseOrange}{RGB}{205,125,58}
\definecolor{baseGrid}{RGB}{212,222,229}
\begin{tikzpicture}[
  x=.75cm,y=.75cm,font=\small,
  line cap=round,line join=round,
  every node/.style={fill=none},
  axis/.style={draw=black!75,line width=.7pt,-{Stealth[length=1.8mm]}},
  projection/.style={draw=baseOrange!85!black,densely dashed,line width=.75pt},
  interval brace/.style={draw=black!55,line width=.45pt,decorate,
    decoration={brace,amplitude=2.2pt}}
]

  \foreach \i in {1,...,7} {
    \ifnum\i<4
      \path[fill=baseBlue!12] (\i-1,4) rectangle (\i,5);
    \else
      \path[fill=baseOrange!12] (\i-1,4) rectangle (\i,5);
    \fi
    \draw[draw=baseGrid,line width=.6pt] (\i-1,4) rectangle (\i,5);
    \node at (\i-.5,4.5) {$I_{\i}^{(0)}$};
    \ifnum\i<4
      \node[text=baseBlue] at (\i-.5,3.25) {$1$};
    \else
      \node[text=baseBlue] at (\i-.5,3.25) {$0$};
    \fi
  }
  \draw[baseOrange,line width=1.15pt] (2,4) rectangle (4,5);

  \begin{scope}[shift={(9.4,.6)}]
    \path[fill=baseOrange!22] (2,3) rectangle (3,4);
    \draw[draw=baseGrid,line width=.45pt,step=.75cm] (0,0) grid (7,7);
    \draw[draw=baseOrange,line width=1pt] (2,3) rectangle (3,4);
    \draw[axis] (0,0)--(7.55,0) node[anchor=west] {$x$};
    \draw[axis] (0,0)--(0,7.55) node[anchor=south] {$y$};

    \foreach \i in {1,...,7} {
        (\i-1,-.23)--(\i,-.23);
      \node[anchor=north] at (\i-.5,-.25) {$I_{\i}^{(0)}$};
      \node[anchor=east] at (-.25,\i-.5) {$I_{\i}^{(0)}$};
    }

    \draw[projection] (2.4,0)--(2.4,3.6)--(0,3.6);
    \draw[baseOrange!85!black,line width=.8pt] (2.4,-.07)--(2.4,.07);
    \draw[baseOrange!85!black,line width=.8pt] (-.07,3.6)--(.07,3.6);
    \node[anchor=south west,text=baseOrange!85!black] at (2.42,.12) {$p_3$};
    \node[anchor=south west,text=baseOrange!85!black] at (.12,3.67) {$p_4$};
    \fill[baseOrange!85!black] (2.4,3.6) circle (2.8pt);
    \node[anchor=south,font=\normalsize] at (3.7,7.85)
      {$f_{P,3}(p_3,p_4)=(p_3,p_4)$};
  \end{scope}
\end{tikzpicture}
\endgroup
    \caption{The base-family example with $m_0=6$ and $j=3$.
    The values $e_{P,3,i}(p_i)$ change from $1$ to $0$ between
    $I_3^{(0)}$ and $I_4^{(0)}$.  The highlighted cell is $I_3^{(0)}\times I_4^{(0)}$;
    the marked point $(p_3,p_4)$ is the unique fixed point of $f_{P,3}$.}
    \label{fig:base-case-fixed-point}
\end{figure}

\begin{lemma}\label{lem:fmonuniq0}
    For each $(P,j)$, the function $f_{P, j}$ is monotone, and has a unique fixed point  $ (p_j, p_{j+1}) \in I_{j}^{(0)} \times I_{j+1}^{(0)}$.
\end{lemma}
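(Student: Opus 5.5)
The plan is to verify the two claims separately, both reduced to properties of the one-dimensional maps $h_{P,j,i}$, $e_{P,j,i}$ and of the index maps $T_x$, $T_y$. All functions below are evaluated only at points of $[1,n_0]$.

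\textbf{Monotonicity.} I will use three elementary observations.
\begin{enumerate}[label=(\roman*)]
\item For fixed $i$, $h_{P,j,i}$ is nondecreasing on $[0,n_0]$. Below $p_i$ the value $x+1$ is at most $p_i$. At $p_i$ the value is $p_i$. Above $p_i$ the value $x-1$ is at least $p_i$. The two clamped pieces take the extreme values $(i-1)m_0+1$ and $im_0$. So the five pieces are glued in nondecreasing order.
\item Since $h_{P,j,i}$ takes values in $I_i^{(0)}$ and the intervals are ordered, $i<i'$ implies $h_{P,j,i}(x)\le h_{P,j,i'}(x')$ for all $x,x'$.
\item $T_x$ and $T_y$ are nondecreasing.
\end{enumerate}
For (iii), consider $y\mapsto y_1-1+e_{P,j,y_1}(y)$. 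Within a block ($y_1$ fixed), $e_{P,j,y_1}$ is nondecreasing: it is $0$ below $p_{y_1}$, a single bit at $p_{y_1}$, and $1$ above. Passing from block $y_1$ to block $y_1+1$, the quantity $y_1-1+e$ goes from at most $y_1$ to at least $y_1$. Hence the whole map is nondecreasing. Composing with $\max\{1,\cdot\}$ or $\min\{m_0+1,\cdot\}$ preserves this, and the same argument applies to $T_y$.

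Now let $(x,y)\le(x',y')$, and consider the first coordinate, $h_{P,j,T_x(y)}(x)$ versus $h_{P,j,T_x(y')}(x')$. By (iii), $T_x(y)\le T_x(y')$. If the indices differ, (ii) gives the inequality; if they are equal, (i) gives it. The second coordinate is handled symmetrically. Since each $h$ maps into $[1,n_0]$, $f_{P,j}$ is a well-defined self-map of $[1,n_0]^2$.

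\textbf{Fixed points.} Let $(x,y)$ be a fixed point. Since $h_{P,j,T_x(y)}(x)\in I^{(0)}_{T_x(y)}$, we must have $x_1=T_x(y)$. Then $h_{P,j,x_1}(x)=x$, and inside $I^{(0)}_{x_1}$ this forces $x=p_{x_1}$, because every other point moves one step. Likewise $y_1=T_y(x)$ and $y=p_{y_1}$. At these points $e_{P,j,i}(p_i)=1_{i\le j}$, so the fixed point conditions become
\[
x_1=\max\{1,\,y_1-1+1_{y_1\le j}\},\qquad y_1=\min\{m_0+1,\,x_1+1_{x_1\le j}\}.
\]
I then solve this by cases on $x_1$.
\begin{itemize}
\item If $x_1\le j$, then $y_1=x_1+1\le m_0+1$. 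Substituting gives $x_1=x_1+1_{x_1+1\le j}$, which forces $x_1+1>j$, hence $x_1=j$ and $y_1=j+1$.
\item If $x_1>j\ge1$, then $y_1=x_1$. Substituting gives $x_1=\max\{1,x_1-1\}=x_1-1$ since $x_1\ge2$, a contradiction.
\end{itemize}
So the only candidate is $(p_j,p_{j+1})$, and $j+1\le m_0+1$ because $j\le m_0$. Conversely, plugging this point in verifies $T_x(p_{j+1})=j$ and $T_y(p_j)=j+1$, so it is indeed fixed.

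I expect no real obstacle. The only delicate point is step (iii), specifically the block boundaries and the clamping in $T_x$, $T_y$. I would write that step out carefully, noting that the jump between consecutive blocks is exactly compensated by the bit $e$ switching from its upper to its lower regime.
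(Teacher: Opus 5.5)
Your proof is correct and takes essentially the same route as the paper. Monotonicity comes from three facts: each $h_{P,j,i}$ is monotone, the blocks $I_i^{(0)}$ are ordered, and $T_x,T_y$ are monotone, which you justify more carefully than the paper's ``by construction''. The fixed-point part is the paper's argument: reduce to $x_1=\max\{1,y_1-1+1_{y_1\le j}\}$ and $y_1=\min\{m_0+1,x_1+1_{x_1\le j}\}$, then split into the cases $x_1\le j$ and $x_1>j$, and finally check $(p_j,p_{j+1})$ directly.
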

\begin{proof}
    For fixed $j, P$ and $x$, the function $y \mapsto  h_{P,j,T_x(y)} (x)$ is monotone, and for fixed $y$, the map $x \mapsto  h_{P,j,T_x(y)} (x)$ is also monotone by construction. Similarly, the second coordinate is also monotone in $(x,y)$, which shows the monotonicity of $f_{P, j}$.
    If $(x,y)$ is a fixed point, the target interval of each coordinate must equal its starting interval, i.e. 
    \begin{equation}\label{eq:interval}
        x_1 = \max\{1,y_1-1+e_{P,j, y_1 }(y)\} \text{ and } y_1 = \min\{m_0+1 ,x_1 + e_{P,j,x_1} (x)\}.
    \end{equation}
  Additionally, both $x$ and $y$ must be fixed points of their corresponding functions: $h_{P,j,x_1} (x) = x = p_{x_1}$ and $ h_{P,j,y_1} (y) = y = p_{y_1}$, therefore $e_{P,j, x_1 }(x) = 1_{x_1 \leq j}$ and $e_{P,j, y_1 }(y) = 1_{y_1 \leq j}$ by definition. Substituting in (\ref{eq:interval}) gives $x_1 = \max\{1,y_1-1+1_{y_1 \leq j}\} \text{ and } y_1 = \min\{m_0+1 ,x_1 + 1_{x_1 \leq j}\}.$
   If $x_1 >j$, the second equation gives $y_1=x_1$, implying $y_1=  \max \{ 1, y_1-1\}$, thus $1=y_1=x_1>j$, a contradiction. 
   If $x_1 \leq j< m_0+1$, the second equation gives $y_1=x_1+1=1+  \max\{1,y_1-1+1_{y_1 \leq j}\}$, thus $y_1 >j$, implying $x_1=j, y_1=j+1$. Conversely, direct substitution shows that \((p_j,p_{j+1})\) is indeed a fixed point.
\end{proof}

\smallskip

We will now describe the simulator $\Sigma_0$. As a reminder, $\Sigma_0$ takes queries from an algorithm $\AA_0$ and answers them without having full access to the hidden function $f$. The only information it receives consists of answers to queries of the form $J(f) \overset{?}{\leq} t$, where $J(f)$ is the label of $f$, i.e. the index of the interval containing the smaller coordinate of its fixed point, and $t$ is an arbitrary integer. 
It guarantees that for each possible label $J$ of the hidden function consistent with its comparison queries, there is a function $f_J$ with label $J$ that is consistent with all answers given to $\AA_0$ so far. Additionally, at any point of the algorithm, if $\AA_0$ has made $Q$ queries to $\Sigma_0$, the number of queries made by  $\Sigma_0$ is at most $O\left(\frac{Q}{\log n_0} \right)$.

Note that for $\Sigma_0$, $j=J(f)$ is an unknown integer, while throughout the algorithm, it maintains a subset of $\PP$ that is consistent with all answers given to $\AA_0$. 

First, observe that each query $(x,y)$ about a function $f_{P,j}$ can be answered using at most four evaluations of pairs $(h_{P,j,i}(u), e_{P,j,i}(u))$: first evaluate $e_{P,j,x_1}(x)$ and $e_{P,j,y_1}(y)$, which allows us to calculate $T_x(y)$ and $T_y(x)$. Now it is sufficient to query $h_{P,j,i}$ with $i=T_x(y)$, and $T_y(x)$, respectively.

Therefore, it is sufficient for $\Sigma_0$ to answer queries of the form $(h_{P,j,i}(x), e_{P,j,i}(x) )$. We now describe the strategy of $\Sigma_0$ for answering these queries.
For each interval $I_i^{(0)}$, $\Sigma_0$ maintains a subinterval $\mathcal{I}_i^{(Q)}$ containing the possible locations of $p_i$ after $Q$ pair evaluations, in other words, the set of possible local fixed point sequences after step $Q$ is  $\mathcal{I}_1^{(Q)} \times \mathcal{I}_2^{(Q)} \times \dots \times\mathcal{I}_{m_0+1}^{(Q)}$. 
Initially, $\mathcal{I}_i^{(0)}=I_i^{(0)}$ for each $i$, and after a query $(h_{P,j,i}(x), e_{P,j,i}(x) )$, $\mathcal{I}_{i'}^{(Q+1)} = \mathcal{I}_{i'}^{(Q)}$ for all $i' \neq i$.

In a general step, consider a query $(h_{P,j,i}(u), e_{P,j,i}(u) )$. If $u \notin\mathcal{I}_i^{(Q)}$, the answer is automatic by the promised property of $h$ and $e$. Otherwise, if $|\mathcal{I}_i^{(Q)}| \geq 2$, deleting $u$ divides $\mathcal{I}_i^{(Q)}$ into two subintervals $\mathcal{I}_i^{(Q)-}$ and $\mathcal{I}_i^{(Q)+}$ (at least one of which is nonempty); $\Sigma_0$ chooses the longer one as $\mathcal{I}_i^{(Q+1)}$ (breaking ties arbitrarily). It answers $(u-1,1)$ if it retained  $\mathcal{I}_i^{(Q)-}$, otherwise it answers $(u+1,0)$.

If $|\mathcal{I}_i^{(Q)}| = 1$, then $\mathcal{I}_i^{(Q)}=\{p_i\}$, thus the answer is $\left(p_i,1_{i\leq j}\right)$. In this case, $\Sigma_0$ makes the query $j \overset{?}{\leq } i-1$, unless its answer is already known. If the answer is Yes, it answers $(p_i,0)$, otherwise it answers $(p_i,1)$.

\smallskip

This strategy maintains the required consistency property: after step $Q$, fix some $j$ that is consistent with all queries of $\Sigma_0$ made so far, and choose $p_i \in \mathcal{I}_i^{(Q)}$ arbitrarily. 
This uniquely determines $(h_{P,j,i}(u), e_{P,j,i}(u) )$ for each $u \in [n_0]$, and therefore it also determines $f$, a monotone function with label $j$ that is consistent with all answers so far.

It remains to bound the number of queries made by $\Sigma_0$ after $Q$ queries made by $\AA_0$. 
The simulator $\Sigma_0$ only makes queries when a subinterval $\mathcal{I}_i^{(Q')}$ of an interval $I_i^{(0)}$ reaches cardinality $1$. A query $(h_{P,j,i}(u), e_{P,j,i}(u) )$ at step $Q'$  leaves all intervals $\mathcal{I}_{i'}^{(Q')}$ for $i' \neq i$ unchanged, while $|\mathcal{I}_i^{(Q'+1)} | \geq \lfloor|\mathcal{I}_i^{(Q')} |/2 \rfloor$. This implies that an interval reaching cardinality $1$ has received $\Omega(\log m_0) $ queries and causes at most one new local query. Since each query of $\AA_0$ causes at most four pair evaluations and $\log m_0 = \Theta(\log n_0)$, this yields the desired upper bound on the number of queries made by $\Sigma_0$.

\section{The general case}

\subsection{The hidden function family $\FF_k$}

For $k \geq 1$, we will define the family $\FF_k$  using the definition and properties of the function family $\FF_{k-1}$ of the previous level.
The functions $F \in \FF_k$ map $[1, n_k]^d$ to itself, where $d = 2 \cdot 3^k$.

First, we group the variables of the input into three groups of $d/3 = 2\cdot 3^{k-1}$: the values of an  element $F\in \FF_k$ will be defined as a function of $X,Y,Z$, where $X, Y,$ and $Z$ consist of the first, second, and third $d/3$ coordinates of the input, respectively. Denote the $d$-dimensional all-ones vector by $\textbf{1}_d$.

Divide $[1,n_k]$ into the intervals $I_i^{(k)}=[(i-1)n_{k-1}+1,in_{k-1}]$ for $1\leq i\leq n_k/n_{k-1}$, and choose functions $G_1',G_2',\dots,G_{n_k/n_{k-1}}'\in\FF_{k-1}$ in a way we will describe later. Define $C_i=(I_i^{(k)})^{d/3}$, and for an interval $I=[a,b]$, define the function $\text{clip}_I: \ZZ \to \ZZ, \text{clip}_I(x) = \min\{ \max\{x, a \},b\} $. 
In dimension $D$, for a product of intervals $[a_1, b_1] \times [a_2,b_2] \times \dots \times [a_D,b_D]$, define the clipping function $\text{clip}_{[a_1, b_1] \times [a_2,b_2] \times \dots \times [a_D,b_D]} : \ZZ^D \to \ZZ^D$ coordinatewise:
\begin{equation*}
    \text{clip}_{[a_1, b_1] \times [a_2,b_2] \times \dots \times [a_D,b_D]}(x) = \left(\text{clip}_{[a_1,b_1]} (x_1), \text{clip}_{[a_2,b_2]} (x_2), \dots, \text{clip}_{[a_D, b_D]} (x_D) \right).
\end{equation*}

The clipping function maps a vector to the closest point of the given hyperrectangle.
We translate the domain and range of each $G_i'$ to $C_i$ and extend its domain to $[1,n_k]^{d/3}$ by clipping: for $x\in[1,n_k]^{d/3}$ and $i\in[1,n_k/n_{k-1}]$, let

\begin{multline}
    \label{eq:Gdef}
    G_i(x) = G_i'( \text{clip}_{C_i} (x) - (i-1)n_{k-1} \textbf{1}_{d/3} ) + (i-1)n_{k-1} \textbf{1}_{d/3}, \text{ and let } \\ G_0=G_1, G_{n_k/n_{k-1}+1} = G_{n_k/n_{k-1}}.
\end{multline}

The modified functions keep their labels, i.e. $J(G_i)=J(G_i')$ for all $i$.
Now, an element $F \in \FF_k$ will have the form 
 \begin{equation*}
     F(X,Y,Z) = \left( G_{r_1(X,Y,Z)} (X), G_{r_2(X,Y,Z)} (Y), G_{r_3(X,Y,Z)} (Z) \right)
 \end{equation*} 
 for some functions $r_1,r_2,r_3: [1,n_k]^d \to [0,n_k/n_{k-1}+1]$ we will define later.

\subsubsection*{Defining the target indices.}
We will now define the functions $r_1, r_2, r_3$ that specify the target indices of $X,Y$ and $Z$. First, for $X=(x_1, x_2, \dots, x_{d/3}) \in [1,n_k]^{d/3}$, let 
\begin{equation*}
    ind^{(k)}(X)= \max\{i:\forall r \in [1,d/3] \; x_r \geq (i-1)n_{k-1}+1 \} = 1+  \left\lfloor \frac{\min_r x_r -1}{n_{k-1}} \right\rfloor,
\end{equation*}
that is, $ind^{(k)}(X)$ is the index of the interval $I_i^{(k)}$ the smallest coordinate of $X$ is contained in. We also define the \emph{inner index} of $X$ relative to a hypercube $C_a$: let $\text{clip}_{C_a}(X)=X'=(x_1',x_2', \dots, x_{d/3}')$, and, for $k\geq2$, define

\begin{multline*}
\hspace{-5 mm}
     iind_a^{(k)}(X)=\max\{i:\forall r \in [1,d/3] \; x_r' \geq (a-1)n_{k-1}+(i-1)n_{k-2}+1 \}  \\
    = 1+\left\lfloor \frac{(\min_r x'_r)-(a-1)n_{k-1}-1}{n_{k-2}} \right\rfloor. 
\end{multline*}

For $k=1$, use the same definition with $m_0$ in place of $n_{k-2}$. 

We will also use the index of the group containing the inner interval $I^{(k-1)}_i$ with $i=iind_a^{(k)}(X)$. We group consecutive intervals according to the label map at level $k-1$, extending the same rule to all inner intervals, and define $iindg_a^{(k)}(X)= iind_a^{(1)}(X) \text{ if } k=1$, and
\begin{equation*}
    iindg_a^{(k)}(X)=
        1+\left\lfloor\dfrac{iind_a^{(k)}(X)}{8m_{k-1}}\right\rfloor
\end{equation*}
otherwise. We omit the superscript $(k)$ when the level is clear.
For $k=1$, each original interval forms its own group. The function $iindg_a^{(k)}$ is monotone, will equal $J(G_a)$ at the component fixed point $p_a$, and equals $m_{k-1}+1$ when $X$ clips to the upper corner of $C_a$.

For an illustration, see Figure \ref{fig:index}. 
\begin{figure}[htbp]
    \centering
\begingroup
\definecolor{tarskiink}{HTML}{243B4A}
\definecolor{tarskiblue}{HTML}{206793}
\definecolor{tarskiorange}{HTML}{AD5627}
\begin{tikzpicture}[x=1.05cm,y=1.05cm,>=Stealth,
    every node/.style={text=tarskiink},
    coarse/.style={draw=tarskiblue!40,line width=.55pt},
    fine/.style={draw=tarskiblue!45,line width=.45pt}]

    \foreach \i in {0,2,3}
        \fill[tarskiblue!7] (\i,\i) rectangle +(1,1);
    \fill[tarskiorange!8] (1,1) rectangle (2,2);
    \foreach \i in {0,...,4} {
        \draw[coarse] (\i,0)--(\i,4);
        \draw[coarse] (0,\i)--(4,\i);
    }
    \foreach \i in {1,2,3} {
        \draw[fine] (1+\i/4,1)--(1+\i/4,2);
        \draw[fine] (1,1+\i/4)--(2,1+\i/4);
    }
    \draw[tarskiorange,line width=1pt] (1,1) rectangle (2,2);
    \foreach \i in {1,3,4}
        \node at (\i-.5,\i-.5) {$C_\i$};
    \node[anchor=west] at (2.08,1.45) {$C_2$};
    \fill[tarskiblue] (20/12,35/12) circle (2.2pt);
    \node[anchor=west,inner sep=3pt] at (20/12,35/12) {$X$};
    \draw[->,tarskiorange,line width=.85pt]
        (20/12,35/12-.10)--(20/12,2+.10);
    \fill[tarskiorange] (20/12,2) circle (2.4pt);
    \node[anchor=south west,inner sep=3pt] at (20/12+.04,2+.03) {$X'$};
    \foreach \i in {1,...,4} {
        \node[below=4pt] at (\i-.5,0) {$I_\i$};
        \node[left=4pt] at (0,\i-.5) {$I_\i$};
    }
    \node[right=4pt] at (4,0) {$u$};
    \node[above=4pt] at (0,4) {$v$};
    \node at (2,4.55) {Outer blocks};
    \node at (2,-.95) {$\operatorname{ind}^{(1)}(X)=2$};

    \draw[->,tarskiorange!80,line width=.8pt] (4.4,2)--(5.35,2);

    \begin{scope}[xshift=6.0cm]
        \fill[tarskiorange!14] (2,3) rectangle (3,4);
        \foreach \i in {0,...,4} {
            \draw[fine] (\i,0)--(\i,4);
            \draw[fine] (0,\i)--(4,\i);
        }
        \draw[tarskiorange,line width=1pt] (0,0) rectangle (4,4);
        \draw[tarskiorange!70,densely dashed,line width=.6pt]
            (8/3,0)--(8/3,4);
        \fill[tarskiorange] (8/3,4) circle (2.8pt);
        \node[anchor=south west,inner sep=3pt] at (8/3+.04,4+.03) {$X'$};
        \foreach \i in {1,...,4} {
            \node[below=4pt] at (\i-.5,0) {$\i$};
            \node[left=4pt] at (0,\i-.5) {$\i$};
        }
        \node[right=4pt] at (4,0) {$r_u$};
        \node[above=4pt] at (0,4) {$r_v$};
        \node at (2,4.55) {Inside $C_2$};
        \node at (2,-.95)
            {$\operatorname{iind}^{(1)}_2(X)=\min\{3,4\}=3$};
    \end{scope}
\end{tikzpicture}
\endgroup
\caption{Outer and inner indices for one two-dimensional coordinate
    group of a six-dimensional input. The outer intervals have length
    $12$, and the inner intervals have length $3$.
    The point $X=(20,35)$ has outer index $2$.
    Clipping it to $C_2=[13,24]^2$ gives $X'=(20,24)$,
    whose local coordinates are $(8,12)$.
    Their inner interval indices are $3$ and $4$, respectively,
    so $\operatorname{iind}^{(1)}_2(X)=3$.
    The enlarged view labels these inner intervals by $r_u$ and $r_v$.}
\label{fig:index}
\end{figure}

For an ordered pair of points $X,Y \in [1,n_k]^{d/3}$, set $cind(X,Y)= \min\{ind(X),ind(Y)-1\}$, and
\begin{equation*}
    \scalemath{0.93}{A(X,Y)=
    \begin{cases}
        0 & \text{ if } cind(X,Y) =0, \\
        cind(X,Y)-1+ 1_{\substack{iindg_{cind(X,Y)}(X)+iindg_{cind(X,Y)+1}(Y) \geq 4m_k^2+2m_k+3}} & \text{ otherwise.}
    \end{cases}}
\end{equation*}

Let $S=A(X,Y)+A(X,Z)+A(Y,Z)$. We are now ready to define the index functions $r_1,r_2,r_3$. We distinguish three cases based on the remainder of $S$ modulo 3:

\begin{equation*}
\left(r_1(X,Y,Z), r_2(X,Y,Z), r_3(X,Y,Z) \right) = 
\begin{cases}
    \left(i,  i+1, i+2\right)  & \text{if } S=3i,\\
    \left(i+1,  i+1, i+2\right)  & \text{if } S=3i+1,\\
    \left(i+1,  i+2, i+2\right)  & \text{if } S=3i+2.
\end{cases}
\end{equation*}

Substituting in the definition of $F$, we get 
\begin{equation}\label{eq:Fdef}  
F(X,Y,Z) = 
\begin{cases}
    \left(G_i(X),  G_{i+1}(Y), G_{i+2}(Z)\right)  & \text{if } S=3i,\\
    \left(G_{i+1}(X),  G_{i+1}(Y), G_{i+2}(Z)\right)  & \text{if } S=3i+1,\\
    \left(G_{i+1}(X),  G_{i+2}(Y), G_{i+2}(Z)\right)  & \text{if } S=3i+2.
\end{cases}
\end{equation}

For an illustration, see Figure \ref{fig:staircase}.

\begin{figure}[tbp]
  \centering
  \begin{tikzpicture}[
      x={(1.15cm,0cm)}, y={(.70cm,.50cm)}, z={(0cm,1.00cm)},
      line join=round, line cap=round,
      staircase edge/.style={draw=black!65,line width=.55pt},
      >=Stealth]
    \definecolor{staircaseBlue}{RGB}{92,143,191}
    \definecolor{staircaseOrange}{RGB}{221,153,69}
    \definecolor{staircaseGreen}{RGB}{91,164,139}
    \newcommand{\staircasecube}[4]{%
      \begin{scope}[shift={(#1,#2,#3)}]
        \path[staircase edge,fill=#4!42]
          (0,0,0)--(1,0,0)--(1,0,1)--(0,0,1)--cycle;
        \path[staircase edge,fill=#4!25]
          (1,0,0)--(1,1,0)--(1,1,1)--(1,0,1)--cycle;
        \path[staircase edge,fill=#4!13]
          (0,0,1)--(1,0,1)--(1,1,1)--(0,1,1)--cycle;
      \end{scope}}

    \foreach \u in {0,...,5} {
      \draw[black!13,line width=.35pt] (\u,0,0)--(\u,5,0);
      \draw[black!13,line width=.35pt] (0,\u,0)--(5,\u,0);
    }
    \draw[->,black!60] (0,0,0)--(5.55,0,0)
      node[right,font=\small] {$X$};
    \draw[->,black!60] (0,0,0)--(0,5.55,0)
      node[right,font=\small] {$Y$};
    \draw[->,black!60] (0,0,0)--(0,0,6.55)
      node[above,font=\small] {$Z$};
    \node[below left,font=\scriptsize,black!65] at (0,0,0) {$0$};
    \foreach \t in {1,...,5} {
      \draw[black!60] (\t,0,0)--++(0,-.10,0)
        node[below,font=\scriptsize] {$\t$};
      \ifnum\t=1
        \draw[black!60] (0,\t,0)--++(-.10,0,0)
          node[left,font=\scriptsize] {$\t$};
      \else
        \draw[black!60] (0,\t,0)--++(.10,0,0)
          node[right,yshift=-3pt,font=\scriptsize] {$\t$};
      \fi
    }
    \foreach \t in {1,...,6} {
      \draw[black!60] (0,0,\t)--++(-.10,0,0)
        node[left,font=\scriptsize] {$\t$};
    }

    \staircasecube{0}{1}{1}{staircaseGreen}
    \staircasecube{1}{2}{2}{staircaseGreen}
    \staircasecube{2}{3}{3}{staircaseGreen}
    \staircasecube{3}{4}{4}{staircaseGreen}
    \staircasecube{0}{1}{2}{staircaseBlue}
    \staircasecube{1}{2}{3}{staircaseBlue}
    \staircasecube{2}{3}{4}{staircaseBlue}
    \staircasecube{3}{4}{5}{staircaseBlue}
    \staircasecube{0}{0}{1}{staircaseOrange}
    \staircasecube{1}{1}{2}{staircaseOrange}
    \staircasecube{2}{2}{3}{staircaseOrange}
    \staircasecube{3}{3}{4}{staircaseOrange}

    \coordinate (sampleInput) at (.80,.12,.18);
    \draw[black!35,densely dashed,line width=.45pt]
      (sampleInput)--(.80,.12,0);
    \draw[->,red!70!black,line width=1.05pt]
      (sampleInput)--(.45,0,1.35);
    \fill[red!70!black] (sampleInput) circle (1.9pt);
    \node[anchor=west,font=\small,inner sep=2pt]
      at (.94,.12,.18) {$(X,Y,Z)$};

    \begin{scope}[x={(1cm,0cm)},y={(0cm,1cm)},font=\small]
      \path[staircase edge,fill=staircaseBlue!35]
        (-.55,-.85) rectangle ++(.22,.22);
      \node[anchor=west] at (-.20,-.74) {$(i,\,i+1,\,i+2)$};
      \path[staircase edge,fill=staircaseOrange!35]
        (3.05,-.85) rectangle ++(.22,.22);
      \node[anchor=west] at (3.40,-.74) {$(i+1,\,i+1,\,i+2)$};
      \path[staircase edge,fill=staircaseGreen!35]
        (7.10,-.85) rectangle ++(.22,.22);
      \node[anchor=west] at (7.45,-.74) {$(i+1,\,i+2,\,i+2)$};
    \end{scope}
  \end{tikzpicture}
    \caption{The three axes represent the coordinate blocks $X$, $Y$, and $Z$, each of dimension $d/3$.
    The staircase begins at $C_1\times C_1\times C_2$, selected for both $S=0$ and $S=1$.
    Integer ticks mark coarse interval boundaries.
    The marked input has all three indices equal to $1$; the arrow points to its target box.}
  \label{fig:staircase}
\end{figure}

We are now ready to define the function class $\FF_k$. Fix a switching index $j \in [1,8m_k^2-1]$, we will describe the elements of $\FF_k$ with switching index $j$.

Choose any sequence of functions $G_1', G_2', \dots, G_{n_k/n_{k-1}}' \in \FF_{k-1}$ such that their labels satisfy the condition
\begin{equation}\label{eq:Fcond2}
    J(G_r') + J(G_{r+1}') \geq 4m_k^2 +2m_{k}+3  \Longleftrightarrow r \leq j \qquad (1\leq r<n_k/n_{k-1}),
\end{equation}
and modify them as defined in (\ref{eq:Gdef}) to get functions $G_0, G_1, \dots, G_{n_k/n_{k-1}+1}$.

The family $\FF_k$ consists of the functions $F:[1,n_k]^d\to[1,n_k]^d$ obtained by (\ref{eq:Gdef}) and (\ref{eq:Fdef}) from functions $G_1',\ldots,G_{n_k/n_{k-1}}'\in\FF_{k-1}$ satisfying (\ref{eq:Fcond2}) with threshold $4m_k^2 + 2m_k +3$ and a switching index $j\in[1,8m_k^2-1]$.
We define the label of $F$ by
\begin{equation*}
    J(F)=1+\left\lfloor\frac{j}{8m_k}\right\rfloor.
\end{equation*}

For $i\in [1,8m_k^2-1]$, the label group containing $C_i$ is given by the same
expression with $i$ in place of $j$.

\begin{lemma}\label{lem:Fmonuniqgen}
    For any choice of $1 \leq j < 8m_k^2$, and  $G_1', \dots, G_{n_k/n_{k-1}}'$ satisfying condition (\ref{eq:Fcond2}), the function $F$ defined by equation (\ref{eq:Fdef}) using (\ref{eq:Gdef}) is a monotone function with a unique fixed point $(p_j,p_{j+1},p_{j+2}) \in C_j \times C_{j+1} \times C_{j+2}$, where $p_r$ is the unique fixed point of $G_r$ for each $r$.
\end{lemma}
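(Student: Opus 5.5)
The plan is to prove the two claims separately: first monotonicity, by reducing it to monotonicity of the index map $S$, and then uniqueness of the fixed point, by a case analysis on $S \bmod 3$.

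\textbf{Monotonicity.} I would use three facts.
\begin{itemize}
\item Each $G_i$ is monotone on $[1,n_k]^{d/3}$. Clipping to $C_i$ is coordinatewise monotone, translation preserves order, and $G_i'$ is monotone.
\item $G_i(X)\le G_{i+1}(X')$ whenever $X\le X'$. Indeed, $G_i$ takes values in $C_i$ and $G_{i+1}$ takes values in $C_{i+1}$, and every point of $C_i$ is coordinatewise below every point of $C_{i+1}$. The padding $G_0=G_1$ and $G_{n_k/n_{k-1}+1}=G_{n_k/n_{k-1}}$ does not break this.
\item Each $r_\ell$ is a nondecreasing function of $S$, which can be read off the three-case table.
\end{itemize}
It then suffices to show that $S$ is monotone in $(X,Y,Z)$, that is, that each $A(\cdot,\cdot)$ is monotone in both arguments.
\begin{itemize}
\item $ind$ is monotone, so $cind$ is monotone.
\item If $cind$ stays equal to $c$, then $A$ can only change through the indicator. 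The indicator is monotone because $iindg_a$ is monotone for a fixed $a$.
\item If $cind$ increases from $c$ to $c'\ge c+1$, then $A$ goes from at most $c$ to at least $c'-1\ge c$. The case $cind=0$, where $A=0$, is handled the same way.
\end{itemize}
Combining these, $S(w)\le S(w')$ for $w\le w'$, hence $r_\ell(w)\le r_\ell(w')$. Therefore $G_{r_\ell(w)}(X)\le G_{r_\ell(w)}(X')\le G_{r_\ell(w')}(X')$.

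\textbf{Fixed point: setup.} Let $(X,Y,Z)$ be a fixed point with target indices $(r_1,r_2,r_3)$. Since $G_r$ maps into $C_r$, we get $X\in C_{r_1}$, $Y\in C_{r_2}$ and $Z\in C_{r_3}$. On $C_r$ the map $G_r$ is a translate of $G_r'$, so $X=p_{r_1}$, $Y=p_{r_2}$ and $Z=p_{r_3}$. In particular $ind(X)=r_1$, and similarly for $Y$ and $Z$.

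I then evaluate each $A$ using two facts quoted before the lemma.
\begin{itemize}
\item $iindg_a(p_a)=J(G_a)$.
\item $iindg_a(V)=m_{k-1}+1$ whenever $V$ lies in a block above $C_a$, since $V$ then clips to the upper corner of $C_a$.
\end{itemize}
In the second situation the indicator in $A$ is always $1$. This holds because $J\ge1$ and $8m_k^2\le m_{k-1}$ give $1+m_{k-1}+1\ge 4m_k^2+2m_k+3$. Write $b_r$ for the indicator of $J(G_r)+J(G_{r+1})\ge 4m_k^2+2m_k+3$. By (\ref{eq:Fcond2}), $b_r=1$ if and only if $r\le j$.

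\textbf{Fixed point: the three cases.}
\begin{itemize}
\item $S=3i$, so the indices are $(i,i+1,i+2)$. Then $A(X,Y)=i-1+b_i$, $A(X,Z)=i$ and $A(Y,Z)=i+b_{i+1}$. Hence $S=3i-1+b_i+b_{i+1}$. Consistency forces $b_i+b_{i+1}=1$, which by the monotone switching gives $i=j$.
\item $S=3i+1$, so the indices are $(i+1,i+1,i+2)$. Then $A(X,Y)=i$ and $A(X,Z)=A(Y,Z)=i+b_{i+1}$. Hence $S=3i+2b_{i+1}\ne 3i+1$, so this case gives no fixed point.
\item $S=3i+2$, so the indices are $(i+1,i+2,i+2)$. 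Then $A(X,Y)=A(X,Z)=i+b_{i+1}$ and $A(Y,Z)=i+1$. Hence $S=3i+1+2b_{i+1}\ne 3i+2$, so this case gives no fixed point.
\end{itemize}
Conversely, substituting $(p_j,p_{j+1},p_{j+2})$ gives $S=3j$. This is well defined because $j+2\le 8m_k^2+1\le n_k/n_{k-1}$. Each component is then fixed by the corresponding $G$.

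\textbf{Main obstacle.} I expect the delicate part to be the boundary bookkeeping rather than the generic computation. This includes blocks with $cind=0$ or $i=0$, the padded $G_0$ and $G_{n_k/n_{k-1}+1}$, and checking that the threshold inequality really makes the "above-block" indicator equal to $1$. I would handle these by separately checking the cases $i\in\{0,1\}$ and $i$ near $n_k/n_{k-1}$, using $j\ge1$ and $j\le 8m_k^2-1$.
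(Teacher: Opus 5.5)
Your proposal is correct and follows the paper's proof essentially step for step. For monotonicity, both use monotonicity of $A$, hence of $S$ and the $r_\ell$, together with $G_i(X)\le G_{i'}(X')$ for $i\le i'$ and $X\le X'$; for uniqueness, both use the same three-row case analysis on $S \bmod 3$. The boundary bookkeeping you deferred is exactly what the paper handles separately, and it closes as you expected: writing $N=n_k/n_{k-1}$, the padding makes $S=0$ force $(X,Y,Z)=(p_1,p_1,p_2)$ (so your setup claim $X\in C_{r_1}$ must read $X\in C_1$ there), which gives $S=2$ since $j\ge1$; and $S=3N-3$ forces $(p_{N-1},p_N,p_N)$, which gives $S=3N-5$ since $j\le N-2$.
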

\begin{proof}
    We use induction on $k$. The case $k=0$ was covered by Lemma \ref{lem:fmonuniq0}. 
    By the induction hypothesis, the functions $G_r'$, and therefore also $G_r$, are monotone functions with a unique fixed point.

    For monotonicity, first see that the function $A$ is monotone: for fixed $cind(X,Y)$ this follows from the monotonicity of $iindg_r$ for a fixed index $r$, and if $cind(X,Y)$ increases, the possible values of $A$ go from $\{cind(X,Y)-1, cind(X,Y)\}$ to a set whose minimum is $cind(X,Y)$.

    This implies the monotonicity of $S$, and therefore, also of $r_1, r_2$, and $r_3$.

    Finally, the property 
    \begin{equation*}
            G_{i'} (X') \geq  G_i (X) \text{ for all } i' \geq i , \text{ and }X'\geq X
    \end{equation*}
    together with the monotonicity of each $G_i$, $r_1, r_2,$ and $ r_3$ proves the monotonicity of $F$.
    
    Fix a function $F$ and suppose that $(X,Y,Z)$ is a fixed point of $F$.

    For two component fixed points $p_s$ and $p_t$ for $1 \leq s,t \leq n_k/n_{k-1}$, 
    \begin{itemize}
        \item $t \leq s$ implies $cind(p_s, p_t) = t-1$. If $t=1$, then $A(p_s,p_t)=0$, otherwise $iindg_{cind(p_s,p_t)}(p_s) = m_{k-1}+1$,
        \item $t = s+1$ implies $cind(p_s, p_t) = t-1=s$ and thus $iindg_{cind(p_s,p_t)}(p_s)+iindg_{cind(p_s,p_t)+1}(p_t) =J(G_s)+J(G_t)$,
        \item $t \geq s+2$ implies $cind(p_s, p_t) = s$ and thus $iindg_{cind(p_s,p_t)+1}(p_t) = m_{k-1}+1$.
    \end{itemize}
    Therefore, by condition (\ref{eq:Fcond2}),
    \begin{equation*}
        A(p_s,p_t) = \begin{cases}
            t-1 & \text{ if } t \leq s, \\
            s-1 +1_{s \leq j} & \text{ if } t = s+1, \\
            s  & \text{ if } t \geq s+2.
        \end{cases}
    \end{equation*}
Since $0\le A\le n_k/n_{k-1}-1$, we have $0\le S\le 3n_k/n_{k-1}-3$.
We first exclude the two boundary cases.

If $S=0$, the fixed-point equations force
$(X,Y,Z)=(p_1,p_1,p_2)$. Since $j\ge1$, this gives $S=A(p_1,p_1)+2A(p_1,p_2)=0+2=2$, a contradiction.

If $S=3n_k/n_{k-1}-3$, they force
$(X,Y,Z)=(p_{n_k/n_{k-1}-1},p_{n_k/n_{k-1}},p_{n_k/n_{k-1}})$. Since $j\le n_k/n_{k-1}-2$,
this gives $S=2A(p_{n_k/n_{k-1}-1},p_{n_k/n_{k-1}})+A(p_{n_k/n_{k-1}},p_{n_k/n_{k-1}})
  =2(n_k/n_{k-1}-2)+(n_k/n_{k-1}-1)=3n_k/n_{k-1}-5$,
again a contradiction.

Now $1 \leq S \leq 3n_k/n_{k-1}-4$, so all selected indices lie in $[1,n_k/n_{k-1}]$. 
For the rest of the proof, we let $r_1=r_1(X,Y,Z), r_2=r_2(X,Y,Z), $ and $r_3=r_3(X,Y,Z)$.
As the image of $G_i$ is contained in $C_i$, this implies 
$X \in C_{r_1}, Y \in C_{r_2},$ and $Z \in C_{r_3}$, and furthermore, that
$X= p_{r_1} $, $Y= p_{r_2}$, $Z= p_{r_3}$.

Again, we distinguish the three cases of the remainder of $S$ modulo 3. 
\begin{equation*} 
\begin{array}{c|c|c}
S & (X,Y,Z) &
A(X,Y)+A(X,Z)+A(Y,Z)\\ \hline
3i
& (p_i,p_{i+1},p_{i+2})
& 3i-1 +1_{i \leq j} + 1_{i+1 \leq j}\\
3i+1
& (p_{i+1},p_{i+1},p_{i+2})
& 3i+2 \cdot 1_{i+1 \leq j}\\
3i+2
& (p_{i+1},p_{i+2},p_{i+2})
& 3i+1+2\cdot 1_{i+1 \leq j}
\end{array}
\end{equation*}

In the second and third rows, equality between the first and last columns would require $2\cdot 1_{i+1 \leq j}=1$, which is impossible. In the first row, equality holds precisely when $1_{i \leq j}+ 1_{i+1 \leq j}=1$, which is equivalent to $i=j$. Consequently, the only possible fixed point is $(p_j,p_{j+1},p_{j+2})$.

Conversely, evaluating the defining sum at this triple gives $S=3j$, so its selected indices are $(j,j+1,j+2)$.
Since each $p_r$ is fixed by $G_r$, this triple is indeed a fixed point of $F$.

\end{proof}

\subsection{The herringbone family}

For the answering strategy of $\Sigma_k$ to queries of the form $J(G_i) \overset{?}{\leq} t$, we need to define a family of two-dimensional Tarski instances called \emph{herringbones}.
This class was introduced by Etessami et al. in \cite{EPRY20} to show that $\text{Tarski}(n,2) = \Omega(\log^2 n)$.
Here, we will need a slightly different hardness property:  a simulator that answers evaluation and sublabel comparison queries using few comparisons about the label of $H$.

First, we define the family of paths that can be the spines of our family of herringbone functions.
Fix a label $j \in [1,m_k]$, we define paths that belong to herringbone functions with label $j$.

Each path is defined using $m_k$ parameters $c_1, c_2, \dots, c_{m_k} \in [-m_k,m_k]$:

Divide the grid into $2m_k+1$ \emph{stripes} defined by 
\begin{equation*}
    L_i=\{ (x,y) \in [1,4m_k^2+2m_k]^2  : 4(i-1)m_k+1 \leq x+y \leq 4im_k\}.
\end{equation*} 
In stripes with even index, the path moves on a staircase located on the diagonals $c_i$ and $c_{i}+1$: $P \cap L_{2i} = \{(x,y) \in L_{2i}: x-y \in \{c_{i},c_{i}+1 \}\}$.
Stripes with odd index are called \emph{transition stripes}.
Let $A_i = (x_{A_i}, y_{A_i})$ and $B_i = (x_{B_i}, y_{B_i})$ be the last spine point of $L_{2i}$ and the first spine point of $L_{2i+2}$, prescribed by the adjacent staircases with parameters $c_i$ and $c_{i+1}$, respectively.
The stripe width ensures that $A_i\leq B_i$ coordinatewise.

Within the transition stripe, the path first moves horizontally from $A_i$ to $(x_{B_i},y_{A_i})$, and then vertically to $B_i$, always taking unit steps.
For the first and last transition stripes, use $(1,1)$ and $(4m_k^2+2m_k, 4m_k^2+2m_k)$ as the outer endpoints, respectively.

For an example, see Figure \ref{fig:herringbone}.

\input{figures/herringbone_overview.tikz}

Write $P=(p_2,\ldots,p_{8m_k^2+4m_k})$, where $p_2=(1,1)$, $p_{8m_k^2+4m_k}=(4m_k^2+2m_k,4m_k^2+2m_k)$, and $p_t=(x_t,y_t)$ satisfies $x_t+y_t=t$.

The functions $H:[4m_k^2+2m_k]^2\to[4m_k^2+2m_k]^2$ with spine $P$ and label $j$ are defined as follows:
\begin{itemize}
    \item Choose an index $t^*$ such that
    $p_{t^*}$ lies in an even-indexed stripe and $ y_{t^*} \in [4(j-1)m_k+1 , 4jm_k]$.

    \item On the spine, all points move towards $p_{t^*}$:
    \[
        H(p_t)=
        \begin{cases}
            p_{t+1}, & t<t^*,\\
            p_t,     & t=t^*,\\
            p_{t-1}, & t>t^*.
        \end{cases}
    \]

    \item Outside the spine, each point moves towards the spine while preserving the sum of its coordinates.
    For $(x,y)\notin P$, writing $t=x+y$, set
    \[
        H(x,y)=
        \begin{cases}
            (x-1,y+1), & x>x_t,\\
            (x+1,y-1), & x<x_t.
        \end{cases}
    \]
\end{itemize}
Thus $H$ has the unique fixed point $p_{t^*}=(x^*,y^*)$.
Let $\HH_k$ be the family of all functions obtained in this way. We set $J(H)=j=\lceil y^*/(4m_k)\rceil$ and define its sublabels $\lambda_i=\lambda_i(H)$ for $i\in[1,8m_k^2+4m_k+1]$ as follows:

For each $y\in[4m_k^2+2m_k]$, let $a_y=\min\{x:(x,y)\in P\}, b_y=\max\{x:(x,y)\in P\} $
be the first and last coordinates visited by the path in row $y$.
Set $\lambda_1=4m_k^2+2m_k+1,\lambda_{8m_k^2+4m_k+1}=0.$
For each $y\in[1, 4m_k^2+2m_k]$, define
\[
    \lambda_{2y}=
    \begin{cases}
        b_y,   & \text{ if } y<y^*,\\
        x^*-1, & \text{ if } y=y^*,\\
        a_y-1, & \text{ if }y>y^*,
    \end{cases}
\]
and, for $2\leq y\leq 4m_k^2+2m_k$, define
\[
    \lambda_{2y-1}=4m_k^2+2m_k+1-a_y.
\]
Thus every interior sublabel belongs to $\{0,\ldots,4m_k^2+2m_k\}$.

\begin{lemma}\label{lem:sublabels}
    For all choices of $j \in [1,m_k]$ and $c_i \in [-m_k, m_k]$, the resulting function is monotone with a unique fixed point $p_{t^*}=(x^*,y^*)$ and sublabels satisfying 
    \begin{equation*}
        \lambda_i+ \lambda_{i+1} \geq 4m_k^2+2m_k+1 \Longleftrightarrow  i \leq 2y^*-2 +1_{x^* > a_{y^*}} \qquad (i\in [1,8m_k^2+4m_k]).
    \end{equation*}
\end{lemma}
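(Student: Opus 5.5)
Write $N=4m_k^2+2m_k$. The proof splits into three parts: structure of the spine, monotonicity and the unique fixed point of $H$, and the sublabel equivalence. Throughout I use the spine's structure row by row.

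\emph{Structure of the spine.} The spine $P$ is a monotone lattice path from $(1,1)$ to $(N,N)$ with unit steps, so it meets each antidiagonal $x+y=t$ in exactly one point $p_t$. Monotonicity of the path rests on the claim that $A_i\le B_i$. I will verify this from the stripe width $4m_k$ and $|c_i|,|c_{i+1}|\le m_k$: the endpoints of consecutive staircases differ in $x-y$ by at most $2m_k+1$, while their coordinate sums differ by about $4m_k$, so both coordinate differences are nonnegative.

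Consequently, each row $y$ is visited by $P$ in a contiguous horizontal segment $[a_y,b_y]$. These segments satisfy $a_{y+1}=b_y$, since a vertical step leaves row $y$ at $x=b_y$ and enters row $y+1$ at the same $x$. Also $a_1=1$ and $b_N=N$.

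\emph{Monotonicity and the fixed point.} Off the spine, $H$ moves a point one step along its antidiagonal towards the spine. On the spine, $H$ moves a point one step along the path towards $p_{t^*}$. I check monotonicity in the standard herringbone way, as in \cite{EPRY20}: it suffices to compare $H$ at neighbours $u$ and $u+e_1$, and at $u$ and $u+e_2$.

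Take neighbours $u$ and $u+e_1$ with $u=(x,y)$. Moving from $u$ to $u+e_1$ increases $t$ by one. Since the spine is monotone, $x_{t+1}-x_t\in\{0,1\}$. A short case analysis follows, according to whether each point lies left of, right of, or on the spine. Points left of the spine map to $(x+1,y-1)$, points right of it map to $(x-1,y+1)$, and spine points map to the adjacent spine point. In every case $H(u)\le H(u+e_1)$, and the vertical case is symmetric.

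This step is where I expect the main friction. It involves the spine points adjacent to $p_{t^*}$ and off-spine points whose antidiagonal neighbour lies on the spine. The key fact I would use is that a spine point's image differs from the point by a unit step in the $+$ or $-$ direction consistent with the path.

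A fixed point must lie on the spine, because off-spine points move. The only fixed spine point is $p_{t^*}$, so the fixed point is unique.

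\emph{Sublabel equivalence.} I split into cases according to the parity of $i$ and its position relative to $2y^*$.

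\begin{itemize}
\item For $i=2y-1$ with $2\le y$, the pair is $\lambda_{2y-1}+\lambda_{2y}=N+1-a_y+\lambda_{2y}$. It is at least $N+1$ if and only if $\lambda_{2y}\ge a_y$. For $y<y^*$ this holds, since $\lambda_{2y}=b_y\ge a_y$. For $y>y^*$ it fails, since $\lambda_{2y}=a_y-1$. For $y=y^*$ it holds if and only if $x^*-1\ge a_{y^*}$, which is exactly what the indicator $1_{x^*>a_{y^*}}$ records at $i=2y^*-1$.
\item For $i=2y$, the pair is $\lambda_{2y}+\lambda_{2y+1}=\lambda_{2y}+N+1-a_{y+1}=\lambda_{2y}+N+1-b_y$. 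It is at least $N+1$ if and only if $\lambda_{2y}\ge b_y$. For $y<y^*$ this holds. For $y\ge y^*$ it fails, since $x^*-1<b_{y^*}$ and $a_y-1<b_y$. This matches the threshold $2y^*-2+1_{x^*>a_{y^*}}<2y^*$.
\item The boundary indices $i=1$ and $i=8m_k^2+4m_k$ need separate checks. For $i=1$ I use $\lambda_1=N+1$. For the last index I use $\lambda_{2N}+0$, and $\lambda_{2N}\le N-1<N+1$, using $y^*<N$. This last inequality I will check holds because the label range forces $y^*\le 4m_k^2$.
\item The case $y^*=1$ also needs care at $i=1$: there $1_{x^*>a_1}$ must agree with $\lambda_1+\lambda_2\ge N+1$, that is, with $x^*-1\ge 1$. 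It does.
\end{itemize}
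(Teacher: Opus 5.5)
Your proposal is correct and follows essentially the paper's route. Monotonicity comes from comparing adjacent grid points: off-spine neighbours lie on the same side of $P$ and get the same translation, and in the mixed case the off-spine point lands on the neighbouring spine point, which the spine point cannot overtake. The sublabel equivalence comes from $a_{y+1}=b_y$ and the odd/even case split. Your checks of $A_i\le B_i$ and of the boundary indices $i=1$ and $i=8m_k^2+4m_k$ fill in details the paper leaves implicit.

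One slip needs fixing. In your $y^*=1$ bullet, $\lambda_1+\lambda_2=N+x^*$, so the inequality is equivalent to $x^*\ge 1$, which always holds. It is not equivalent to $x^*-1\ge 1$, and the claimed equivalence would genuinely fail at $(x^*,y^*)=(1,1)$.

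The case is vacuous, however. Since $p_{t^*}$ lies in an even stripe, $x^*+y^*\ge 4m_k+1$ and $x^*-y^*\le m_k+1$, which force $y^*\ge 3m_k/2\ge 2$. The index $i=1$ then follows from $\lambda_2\ge 0$ together with $1\le 2y^*-2$.
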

\begin{proof}
To prove monotonicity, consider two comparable adjacent grid points. If neither lies on $P$, they are on the same side of $P$ and receive the same translation. If both lie on $P$, their images remain ordered, since all spine points move towards $p_{t^*}$. 
Otherwise, the off-spine point maps to the predecessor or successor of the spine point, according as it is the smaller or larger point. The spine point cannot move past that neighbor, so order is again preserved. 
Moreover, every point except $p_{t^*}$ moves, proving uniqueness of the fixed point.

Since $P$ moves by unit steps, we have $a_{y+1}=b_y$.
Substituting the definitions of the sublabels gives
\begin{equation*}
    \begin{aligned}
 &\lambda_{2y}+\lambda_{2y+1}\geq4m_k^2+2m_k+1
  \quad\Longleftrightarrow\quad y<y^*,\\
&\lambda_{2y-1}+\lambda_{2y}\geq 4m_k^2+2m_k+1
\quad\Longleftrightarrow\quad
  y<y^*\ \text{or}\ (y=y^*\text{ and }x^*>a_{y^*}).
\end{aligned}
\end{equation*}

Thus the inequality holds precisely up to index
$2y^*-2+1_{x^*>a_{y^*}}$, as claimed.
\end{proof}

\begin{lemma}\label{lem:herringbone-simulation} 
There exists an online simulator $\Sigma_k^{\mathrm H}$
with the following property.
Given access to an unknown admissible label $j$ only through comparison queries of the form $j\overset{?}{\leq}t$, the simulator answers any adaptively chosen sequence of evaluation queries $H(x,y)$ and sublabel comparison queries $\lambda_i\overset{?}{\leq}t$, where $(x,y)\in[4m_k^2+2m_k]^2$, while maintaining the following properties:
\begin{itemize}
    \item At every stage, for every admissible label $j'$
    consistent with all answers to the comparison queries
    made by $\Sigma_k^{\mathrm H}$ so far, there exists
    a function $H'\in\HH_k$ with $J(H')=j'$ that is
    consistent with all evaluation and sublabel comparison answers returned by
    $\Sigma_k^{\mathrm H}$ so far.

    \item After receiving $Q$  queries, the
    simulator has made at most $O(Q/\log m_k)$ comparison
    queries about $j$.
\end{itemize}
\end{lemma}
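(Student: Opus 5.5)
We will imitate the base-case simulator $\Sigma_0$, now applied to the herringbone structure. The hidden data of $H$ consist of the staircase parameters $c_1,\dots,c_{m_k}\in[-m_k,m_k]$, the label $j$, and the fixed point $p_{t^*}$, whose $y$-coordinate lies in the band $[4(j-1)m_k+1,4jm_k]$ inside an even stripe. The simulator maintains, for every even stripe $L_{2i}$, an interval $\mathcal{C}_i$ of still-possible values of $c_i$. For every band it also maintains an interval of still-possible positions of $p_{t^*}$ along the staircase inside it, used only if $j$ turns out to be that band's index. Initially all of these are full. The invariant is that every choice of consistent label $j'$, of parameters $c_i\in\mathcal{C}_i$, and of a fixed-point position in the surviving interval of band $j'$ is consistent with all answers given so far.

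An evaluation query $H(x,y)$ with $x+y=t$ in a transition stripe is resolved by the neighbouring staircase parameters: the spine there is determined by $c_i,c_{i+1}$. So it suffices to handle queries that ask on which side of the spine a diagonal point lies. In stripe $L_{2i}$ the query asks whether $x-y$ is below, on, or above $\{c_i,c_i+1\}$. If the answer is forced by $\mathcal{C}_i$, we answer it. Otherwise we cut $\mathcal{C}_i$ at $x-y$ and keep the larger half, exactly as $\Sigma_0$ does.

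Once $\mathcal{C}_i$ becomes a singleton, further queries may hit the spine itself. Then the answer depends on whether $t<t^*$, $t=t^*$ or $t>t^*$, i.e. on $j$ and on the fixed point's position. We binary search on the position of $t^*$ relative to $t$ by keeping the larger surviving part, which is possible as long as $t$ lies in a band that could still host $j$. Only when the fixed-point position interval of that band has shrunk to a point do we ask one comparison $j\overset{?}{\le}t'$, namely whether $j$ is below or above that band, at most once per band. Off-band spine queries are resolved by a single comparison of this kind as well.

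Sublabel queries $\lambda_i\overset{?}{\le}t$ are handled through their definitions. Each $\lambda$ is $b_y$, $a_y-1$, $x^*-1$, or a reflection of $a_y$. These are determined by the spine in row $y$, i.e. by one or two $c_i$, together with the position of row $y$ relative to $y^*$. We reduce each sublabel query to $O(1)$ evaluation-like questions about $c_i$ and about $y$ versus $y^*$, and answer them with the same halving rules.

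The accounting mirrors $\Sigma_0$. Every comparison on $j$ is charged to an interval, either some $\mathcal{C}_i$ or a band's fixed-point position interval, that has shrunk from size $\Theta(m_k)$ to $1$. By the halving rule, each such interval has absorbed $\Omega(\log m_k)$ queries, and each charged interval triggers $O(1)$ comparisons. This gives $O(Q/\log m_k)$.

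The main obstacle is the second stage. The position of $p_{t^*}$ interacts with $j$, and a spine query at height $y$ in band $b$ reveals whether $j<b$, $j=b$ or $j>b$ unless we hide this. To avoid spending a comparison too early, we answer such queries pointing toward a still-uncertain region whenever one exists. If this turns out to be too fragile, we first try charging the comparison to the queries that pinned the band's staircase parameter $c_b$. Those are $\Omega(\log m_k)$ distinct queries needed before any spine point of band $b$ is even identifiable, so the comparison for that band is paid for by them.
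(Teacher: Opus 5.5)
\textbf{There is a genuine gap: your mechanism for hiding $j$ fails.} Take a query that lands on the spine at a point $p_t$ whose height satisfies $4(b-1)m_k<y_t\le 4bm_k$.
\begin{itemize}
\item For every label $j'<b$ the fixed point has $y^*<y_t$, hence $t^*<t$.
\item For every label $j'>b$ it has $y^*>y_t$, hence $t^*>t$.
\end{itemize}
This holds regardless of where $p_{t^*}$ sits inside its band. So returning $p_{t+1}$ already excludes every label below $b$, and returning $p_{t-1}$ excludes every label above $b$. If the current comparison interval for $j$ still contains labels on the excluded side, the consistency invariant breaks at the \emph{first} on-spine answer in that band, unless you ask $j\overset{?}{\le}b-1$ or $j\overset{?}{\le}b$ at that moment.

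Binary searching the position of $t^*$ cannot postpone these comparisons until the band's position interval has shrunk to a point. ``Pointing toward a still-uncertain region'' does not help either, because both directions are commitments about $j$. The same objection applies to your plan of answering sublabel queries by halving on ``$y$ versus $y^*$''. Transition-stripe spine points are not covered by the position-interval idea at all: there is no fixed-point position to search there, yet the direction depends on $j$.

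\textbf{Your fallback is the correct argument and is exactly the paper's.} It should replace the position-interval bookkeeping and the accounting built on it, not back it up. Concretely:
\begin{itemize}
\item Require the fixed point of a label-$j$ function to lie in $L_{2j}$. Then the direction at any spine point of $L_{2i}$ or $L_{2i+1}$ is determined by comparing $j$ with $i$, plus the chosen fixed point when $j=i$.
\item Ask $j\overset{?}{\le}i-1$ and $j\overset{?}{\le}i$, reusing old answers, exactly when a query is forced onto the spine. Under the halving rule this happens in $L_{2i}$ only after $\mathcal{C}_i$ has been cut to constant size, and in $L_{2i+1}$ only after $\mathcal{C}_i$ or $\mathcal{C}_{i+1}$ has been. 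So each parameter triggers at most two comparisons, paid for by the $\Omega(\log m_k)$ queries that pinned it.
\item When you learn $j=i$ and no fixed point has been chosen yet, declare the queried spine point the fixed point, or the first spine point of $L_{2i}$ if $y>4im_k$. No position interval is needed.
\item In a transition stripe, split the query into two independent comparisons and halve each separately: $y$ against $y_{A_i}$ depends only on $c_i$, and $x$ against $x_{B_i}$ depends only on $c_{i+1}$. If you only decide ``which side of the spine'', the constraint set on $(c_i,c_{i+1})$ stops being a product of intervals.
\item Sublabels need no separate machinery. The condition $\lambda_{2y}\le t$ is equivalent to a condition on the single value $H(t+1,y)$, and $\lambda_{2y-1}\le t$ to one on $H(4m_k^2+2m_k-t,y)$. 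So both are handled by the evaluation simulator.
\end{itemize}
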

\begin{proof}
First, note that each sublabel comparison query can be answered
using at most one evaluation query: for $t\in\{0,\ldots,4m_k^2+2m_k-1\}$,
\[
\begin{aligned}
    \lambda_{2y}\leq t
    &\Longleftrightarrow
    H(t+1,y)\in\{(t,y+1),(t,y),(t+1,y-1),(t+1,y)\}
    &&(y\in[1, 4m_k^2+2m_k]),\\
    \lambda_{2y-1}\leq t
    &\Longleftrightarrow
    H(4m_k^2+2m_k-t,y)=(4m_k^2+2m_k+1-t,y-1)
    && (y \in [2, 4m_k^2+2m_k]).
\end{aligned}
\]
Therefore, in the rest of the proof, we only discuss evaluation queries.
We maintain the additional requirement that the fixed point lies in $L_{2j}$, with $4(j-1)m_k<y^*\leq4jm_k$.

For each $i$, the simulator maintains an interval $C_i$ of admissible values of $c_i$. Initially, $C_i = [-m_k, m_k]$ for each $i$. On query $(x,y)$, consider first the case when the queried point $(x,y)$ is in a stripe with even index $L_{2i}$. Let $C_i=[a,b]$ for some $a\leq b$.
If $\max\{ x-y -a, b-(x-y)+1 \} \geq 2$, the query is not forced to lie on $P$.
In this case, the simulator answers $H(x,y) =(x-1, y+1)$  and updates $C_i= \left[ a,\min\{b, x-y-2 \}\right] $ if $ x-y -a > b-(x-y)$, while it answers $H(x,y) =(x+1, y-1)$  and updates $C_i= \left[\max\{a, x-y+1\}  ,b \right]$ if $ x-y -a \leq b-(x-y)$. 
If $\max\{ x-y -a, b-(x-y)+1 \} < 2$, the queried point is forced to be on the spine $P$ and either $C_i=\{x-y-1\}$, $C_i=\{x-y\}$, or $C_i=\{x-y-1,x-y\}$ is also forced. In the latter case, fix $c_i=x-y$. 
The simulator now determines  $j \overset{?}{\leq} i$ and $j \overset{?}{\leq} i-1$ using at most two queries (reusing previous answers if possible), and concludes $j > i$, $j<i$, or $j=i$. 
If $j=i$ and the fixed point has not yet been chosen, set $p_{t^*}=(x,y)$ if $y\leq4im_k$; otherwise, choose the first spine point of $L_{2i}$ as $p_{t^*}$. The simulator answers according to the definition of $H$: it returns $(x,y)$ if this is the chosen fixed point, and otherwise the neighboring point of $P$ in the direction of the fixed point. This direction is determined by $j$ or by the already chosen fixed point.
    
If the queried point is in an interior stripe with odd index $L_{2i+1}$, the answer depends on both $c_i$ and $c_{i+1}$.

Consider a query $(x,y)\in L_{2i+1}$, where $i \in [1, m_k-1]$.
In this stripe, the spine first moves horizontally at height $y_{A_i}$ and then vertically at first coordinate $x_{B_i}$, and the answer to a query depends only on $A_i$, $B_i$, and $j$, as summarized in the following table.

\[
\begin{array}{c|c|c}
    \text{Condition}
    & H(x,y)\text{ if }j\leq i
    & H(x,y)\text{ if }j>i\\
    \hline
    x>x_{B_i}\text{ or }y<y_{A_i}
    & (x-1,y+1) & (x-1,y+1)\\
    x<x_{B_i}\text{ and }y>y_{A_i}
    & (x+1,y-1) & (x+1,y-1)\\
    x<x_{B_i}\text{ and }y=y_{A_i}
    & (x-1,y) & (x+1,y)\\
    x=x_{B_i}\text{ and }y>y_{A_i}
    & (x,y-1) & (x,y+1)\\
    x=x_{B_i}\text{ and }y=y_{A_i}
    & (x-1,y) & (x,y+1)
\end{array}
\]

\smallskip

Moreover, the comparisons between $y$ and $y_{A_i}$, and between $x$ and $x_{B_i}$, depend only on $c_i$ and $c_{i+1}$, respectively.
The simulator answers them independently using the interval restriction from the even-stripe case, with $x-y$ replaced by $8im_k-2y$ and $2x-4(2i+1)m_k-1$, respectively.

The answer is then given by the table.
If it requires knowing whether $j\leq i$, then $(x,y)\in P$, so at least one of the two comparisons is an equality.
The corresponding parameter has therefore been fixed, and the simulator determines $j\overset{?}{\leq}i$ using at most one query, reusing a previous answer whenever possible.
In the first and last transition stripes, only one adjacent parameter is involved. The
same restriction rule applies; on-spine points move forward in the first stripe and backward
in the last, so no label comparison is needed there.

\begin{figure}[htbp]
\centering
\begingroup
\definecolor{hbSpine}{HTML}{185A78}
\definecolor{hbStripe}{HTML}{EDF3F6}
\definecolor{hbGrid}{HTML}{DDE1E4}
\definecolor{hbField}{HTML}{9BA5AC}
\definecolor{hbInk}{HTML}{27343D}
\definecolor{hbAccent}{HTML}{B34D2F}

\begin{tikzpicture}[
  x=0.50cm,y=0.50cm,
  font=\small, text=hbInk,
  hbfield/.style={draw=hbField,line width=0.4pt,-{Stealth[length=1.6mm,width=1.05mm]}},
  hbpatharrow/.style={draw=hbSpine,line width=1.1pt,-{Stealth[length=1.9mm,width=1.35mm]}},
  stripe label/.style={fill=white,fill opacity=0.94,text opacity=1,inner sep=2.5pt},
]
\begin{scope}
  \clip (7.5,9.5) rectangle (22.5,21.5);
  \fill[hbStripe] (7.5,9.5) -- (15,9.5) -- (7.5,17) -- cycle;
  \fill[hbStripe] (22.5,14) -- (22.5,21.5) -- (15,21.5) -- cycle;
  \draw[step=1,draw=hbGrid,line width=0.16pt] (7,9) grid (23,22);
  \foreach \s in {24.5,36.5} {
    \draw[draw=hbField,densely dashed,line width=0.5pt]
      ({\s-22},22) -- ({\s-9},9);
  }

  \draw[hbfield] (8,18) -- (9,17);
  \draw[hbfield] (8,20) -- (9,19);
  \draw[hbfield] (10,16) -- (11,15);
  \draw[hbfield] (10,18) -- (11,17);
  \draw[hbfield] (10,20) -- (11,19);
  \draw[hbfield] (12,14) -- (13,13);
  \draw[hbfield] (12,16) -- (13,15);
  \draw[hbfield] (12,18) -- (13,17);
  \draw[hbfield] (12,20) -- (13,19);
  \draw[hbfield] (14,12) -- (13,13);
  \draw[hbfield] (14,18) -- (15,17);
  \draw[hbfield] (14,20) -- (15,19);
  \draw[hbfield] (16,10) -- (15,11);
  \draw[hbfield] (16,12) -- (15,13);
  \draw[hbfield] (16,18) -- (17,17);
  \draw[hbfield] (16,20) -- (17,19);
  \draw[hbfield] (18,10) -- (17,11);
  \draw[hbfield] (18,12) -- (17,13);
  \draw[hbfield] (18,14) -- (19,13);
  \draw[hbfield] (18,16) -- (19,15);
  \draw[hbfield] (18,18) -- (19,17);
  \draw[hbfield] (20,10) -- (19,11);
  \draw[hbfield] (20,12) -- (19,13);
  \draw[hbfield] (20,14) -- (19,15);
  \draw[hbfield] (20,16) -- (19,17);
  \draw[hbfield] (22,10) -- (21,11);
  \draw[hbfield] (22,12) -- (21,13);
  \draw[hbfield] (22,14) -- (21,15);

  \draw[hbInk,dashed,line width=0.5pt] (7.5,13) -- (22.5,13);
  \draw[hbInk,dashed,line width=0.5pt] (19,9.5) -- (19,21.5);

  \draw[hbSpine,line width=1.55pt,line join=round,line cap=round]
    (7,9) -- (8,9) -- (8,10) -- (9,10) -- (9,11) -- (10,11) -- (10,12) -- (11,12) -- (11,13) -- (12,13) -- (13,13) -- (14,13) -- (15,13) -- (16,13) -- (17,13) -- (18,13) -- (19,13) -- (19,14) -- (19,15) -- (19,16) -- (19,17) -- (19,18) -- (19,19) -- (20,19) -- (20,20) -- (21,20) -- (21,21) -- (22,21) -- (22,22) -- (23,22) -- (23,23);
  \draw[hbpatharrow] (9,11) -- (10,11);
  \draw[hbpatharrow] (11,12) -- (11,13);
  \draw[hbpatharrow] (12,13) -- (13,13);
  \draw[hbpatharrow] (15,13) -- (16,13);
  \draw[hbpatharrow] (18,13) -- (19,13);
  \draw[hbpatharrow] (19,14) -- (19,15);
  \draw[hbpatharrow] (19,16) -- (19,17);
  \draw[hbpatharrow] (19,18) -- (19,19);
  \draw[hbpatharrow] (20,19) -- (20,20);
  \draw[hbpatharrow] (21,21) -- (22,21);
\end{scope}

  \draw[hbInk,line width=0.45pt] (7.5,9.5) rectangle (22.5,21.5);
  \node[stripe label,text=hbSpine] at (9.1,10.6) {$L_{2i}$};
  \node[stripe label] at (10.4,20.4) {$L_{2i+1}$};
  \node[stripe label,text=hbSpine] at (21,20.4) {$L_{2i+2}$};

  \node[anchor=east] at (7.4,13) {$y_{A_i}$};
  \node[anchor=south] at (19,21.7) {$x_{B_i}$};

  \fill[hbSpine] (11,13) circle[radius=1.8pt];
  \fill[hbSpine] (19,18) circle[radius=1.8pt];
  \fill[hbSpine] (19,13) circle[radius=1.8pt];
  \node[anchor=north east,fill=white,inner sep=1.8pt]
    at (10.85,12.78) {$A_i$};
  \node[anchor=south east,fill=white,inner sep=1.8pt]
    at (18.75,18.25) {$B_i$};
  \node[anchor=north west,fill=white,inner sep=2pt]
    at (19.2,12.7) {$(x_{B_i},y_{A_i})$};

  \draw[hbAccent,densely dotted,line width=0.65pt]
    (14,13) -- (14,16) -- (19,16);
  \node[anchor=east,fill=white,inner sep=1.8pt,text=hbAccent]
    at (13.78,14.6) {$y>y_{A_i}$};
  \node[anchor=south,fill=white,inner sep=1.8pt,text=hbAccent]
    at (16.5,16.16) {$x<x_{B_i}$};
  \draw[hbAccent,line width=1.1pt,-{Stealth[length=2.1mm,width=1.4mm]}]
    (14,16) -- (15,15);
  \fill[hbAccent] (14,16) circle[radius=2pt];
  \node[anchor=south east,fill=white,inner sep=1.8pt,text=hbAccent]
    at (13.85,16.15) {$(x,y)$};
  \node[anchor=north west,fill=white,inner sep=1.8pt,text=hbAccent]
    at (15.15,14.95) {$H(x,y)$};

\end{tikzpicture}
\endgroup
\caption{A query in the transition stripe $L_{2i+1}$. Comparing $y$ with
$y_{A_i}$ and $x$ with $x_{B_i}$ gives comparisons involving only $c_i$
and $c_{i+1}$, respectively. Together they determine the off-spine answer;
on the spine, the direction also depends on whether $j\leq i$.}
\label{fig:herringbone_transition}
\end{figure}

For consistency, take any label $j$ consistent with the comparison answers, and choose an arbitrary remaining value of each $c_i$ from $C_i$.
If no fixed point has been chosen, choose one in the corresponding even stripe $L_{2j}$ with $4(j-1)m_k < y^* \leq 4jm_k$. Otherwise, retain the chosen fixed point; in this case the label has already been determined.

The resulting function reproduces all off-spine answers by the preserved geometric comparisons.
The on-spine answers also agree, since their directions were determined by comparisons about $j$ or by the already chosen fixed point.
Thus every remaining label admits a function consistent with all answers given so far.

Finally, when making a new query $j \overset{?}{\leq} i$, at least one of $c_i$ and $c_{i+1}$ has been fixed, which requires $|C_i| \leq 3$ or $|C_{i+1}| \leq 3$, respectively. 
Each parameter $c_i$ can cause at most two new label queries, namely $j \overset{?}{\leq} i-1$ and $j \overset{?}{\leq} i$.
As each query influences at most two intervals $C_i$, and retains at least $|C_i|/2-1$ elements from said intervals, the number of label queries made by $\Sigma_k^H$ is at most $O(Q/\log m_k)$ after receiving $Q$ queries.

\end{proof}

\subsection{Function of the simulator $\Sigma_k$}

Finally, we will describe the operation of the simulator $\Sigma_k$ such that it satisfies the consistency and the query bound conditions.

First, notice that each evaluation query $F(X,Y,Z)$ can be answered using at most three evaluation queries to the component functions $G_i$. Indeed, the values $A(X,Y)$, $A(X,Z)$ and $A(Y,Z)$ depend only on the input coordinates, so we can compute $S$ and the target indices $r_1(X,Y,Z)$, $r_2(X,Y,Z)$ and $r_3(X,Y,Z)$ without querying any component function.
We then make the queries
    $G_{r_1(X,Y,Z)}(X),
    G_{r_2(X,Y,Z)}(Y),
    G_{r_3(X,Y,Z)}(Z)$,
and concatenate the three returned vectors to obtain $F(X,Y,Z)$.

By the induction hypothesis, the simulator $\Sigma_{k-1}$ exists and satisfies all the properties stated in the preliminaries at level $k-1$. For each component $G_i'$, we use a separate copy of this simulator, denoted by $\Sigma_{k-1,i}$. We also refer to these as \emph{subsimulators}.

To answer a component query $G_i(U)$, we clip and translate $U$ as in (\ref{eq:Gdef}), forward the resulting evaluation query to $\Sigma_{k-1,i}$, and translate its answer back to $C_i$. 
The endpoint copies $G_0$ and $G_{n_k/n_{k-1}+1}$ use the same subsimulators as $G_1$ and $G_{n_k/n_{k-1}}$, respectively. 

As a reminder, $\Sigma_k$ has access to $F$ only through queries of the form $J(F) \overset{?}{\leq} t$, and it must answer queries of $\AA_k$ such that for every index $J$ that is still consistent with the queries of $\Sigma_k$ there must exist an $F \in \FF_k$ with $J(F) =J$ that is consistent with all answers given to $\AA_k$ so far.

Furthermore, after $\AA_k$ has made $Q$ queries, the number of queries made by $\Sigma_k$ is bounded from above by $O_k\left( \frac{Q}{\log^{k+1}n_k}\right)$.

To achieve this, the simulator $\Sigma_k$ first translates a query made by $\AA_k$ to at most three queries about the functions $G_i$, and forwards each such query to the corresponding $\Sigma_{k-1,i}$. The subsimulator either returns an answer immediately, or its answer depends on one or more queries of the form $J(G_i) \overset{?}{\leq } t$.

When $\Sigma_k$ receives a query $J(G_i) \overset{?}{\leq } t$ from a subsimulator $\Sigma_{k-1,i}$, it makes the sublabel query $\lambda_i \overset{?}{\leq } t-1$ to $\Sigma_k^H$, and returns its answer to $\Sigma_{k-1,i}$. 
While answering such a sublabel query, the simulator $\Sigma_k^H$ might first ask queries $J(H) \overset{?}{\leq} t$, each of which $\Sigma_k$ forwards as $J(F) \overset{?}{\leq} t$ to the comparison oracle $\mathcal{O}_k$; $\Sigma_k$ then returns each answer to $\Sigma_k^H$.
For a schematic map of the operation of $\Sigma_k$, see Figure \ref{fig:query_interfaces_five_enclosed}.

\begin{figure}[tbp]
    \centering
    \resizebox{\linewidth}{!}{\begingroup
\definecolor{qeval}{RGB}{35,93,132}
\definecolor{qcomp}{RGB}{120,63,132}
\definecolor{qframe}{RGB}{90,111,127}
\definecolor{qpale}{RGB}{247,249,251}
\begin{tikzpicture}[
  x=.7cm,y=.7cm,font=\normalsize,
  box/.style={draw=qframe,line width=.85pt,rounded corners=3pt,
    fill=white,minimum width=2.05cm,minimum height=.82cm,align=center},
  child/.style={box,draw=qeval},
  parent/.style={font=\Large,inner sep=5pt,minimum width=1.35cm,minimum height=.85cm},
  eval/.style={-{Stealth[length=2mm,width=1.35mm]},draw=qeval,line width=.85pt},
  comp/.style={-{Stealth[length=2mm,width=1.35mm]},draw=qcomp,line width=.85pt},
  return/.style={comp,-{Stealth[length=1.7mm,width=1.1mm]}},
  lab/.style={fill=none,draw=none,inner sep=2pt,font=\small},
  note/.style={font=\footnotesize,text=black!65},
  frame/.style={draw=qframe,line width=.9pt,rounded corners=7pt,fill=qpale}
]
\path[frame] (-10.7,-.5) rectangle (10.7,8.25);
\node[parent] (parent) at (0,6.2) {$\Sigma_k$};

\node[font=\Large] (algorithm) at (-13,6.2) {$\mathcal A_k$};
\node[box] (herringbone) at (13.3,6.2) {$\Sigma_k^{\mathrm H}$};
\node[box] (oracle) at (0,10.4) {Comparison oracle};

\node[child] (first)    at (-8.8,.4) {$\Sigma_{k-1,1}$};
\node[child] (previous) at (-3.5,.4) {$\Sigma_{k-1,i-1}$};
\node[child] (middle)   at (0,.4) {$\Sigma_{k-1,i}$};
\node[child] (next)     at (3.5,.4) {$\Sigma_{k-1,i+1}$};
\node[child] (last)     at (8.8,.4) {$\Sigma_{k-1,n_k/n_{k-1}}$};
\node at (-6.1,.4) {$\cdots$};
\node at (6.1,.4) {$\cdots$};

\foreach \child/\leftport/\rightport in {
  first/-.70/-.56,
  previous/-.42/-.28,
  middle/-.07/.07,
  next/.28/.42,
  last/.56/.70
}{
  \draw[eval] ($(parent.south)+(\leftport,0)$)
    .. controls +(0,-1.5) and +(0,1.7) ..
    ($(\child.north)+(-.36,0)$);
  \draw[return] ($(\child.north)+(.36,0)$)
    .. controls +(0,1.7) and +(0,-1.5) ..
    ($(parent.south)+(\rightport,0)$);
}

\node[lab,anchor=east] at (-.6,2.25) {$G_i(\mathbf x)$};
\node[lab,anchor=west] at (.2,2.25) {$J(G_i)\overset{?}{\leq}t$};

\draw[eval] (algorithm.east) -- (parent.west);
\node[lab,anchor=south] at (-6.9,6.35) {$F(\mathbf x)$};

\draw[comp] (parent.north east) to[out=18,in=162] (herringbone.north west);
\node[lab,anchor=north] at (7,7.35) {$\lambda_i\overset{?}{\leq}t-1$};
\draw[comp] (herringbone.south west) to[out=198,in=-18] (parent.east);
\node[lab,anchor=north] at (7,4.85) {$J(H)\overset{?}{\leq}t$};

\draw[comp] (parent.north) -- (oracle.south);
\node[lab,anchor=west] at (.2,9.1) {$J(F)\overset{?}{\leq}t$};

\draw[eval] (-5.1,-1.5) -- (-3.9,-1.5);
\node[note,anchor=west] at (-3.7,-1.5) {Evaluation query};
\draw[comp] (1.5,-1.5) -- (2.7,-1.5);
\node[note,anchor=west] at (2.9,-1.5) {Comparison query};
\end{tikzpicture}
\endgroup}
    \caption{Schematic diagram showing interactions between different agents discussed in the proof. Arrows show queries made in the direction of the arrow.}
    \label{fig:query_interfaces_five_enclosed}
\end{figure}

Now it remains to verify the consistency and query bound of $\Sigma_k$.

The query bound follows from the bound on the simulators $\Sigma_{k-1,i}$ and $\Sigma_k^H$:
After $Q$ queries made by $\AA_k$ and forwarded to the subsimulators as at most $3Q$ evaluation queries, they have made at most $O_k\left( \frac{Q}{\log^k n_{k-1}} \right)$ local queries $J(G_i) \overset{?}{\leq } t$ by the induction hypothesis. These local queries were then forwarded to $\Sigma_k^H$ as sublabel comparison queries. 
By Lemma \ref{lem:herringbone-simulation}, $\Sigma_k^H$ answered these queries while making at most $O_k\left( \frac{Q}{\log^{k} n_{k-1} \log m_k} \right)= O_k\left( \frac{Q}{\log^{k+1} n_{k} } \right)$ label comparison queries that were forwarded to $\OO_k$ as local queries by $\Sigma_k$. This verifies the bound on the local queries of $\Sigma_k$.

For consistency, we will choose component functions with $J(G_i) = \lambda_i +1$ for each $i$. These are admissible component labels, since $1\leq\lambda_i+1\leq4m_k^2+2m_k+2\leq m_{k-1}$. By Lemma \ref{lem:sublabels}, the sublabel-sum condition gives the switching index $j=2y^*-2+1_{x^*>a_{y^*}}$, which satisfies $1\leq j<8m_k^2$, and hence
\begin{equation*}
    J(F)=1+\left\lfloor\frac{2y^*-2+1_{x^*>a_{y^*}}}{8m_k}\right\rfloor
    =\left\lceil\frac{y^*}{4m_k}\right\rceil=J(H).
\end{equation*}
    
We need that at any point of the algorithm, for any label value $J$ that is consistent with all local queries made so far, there exists a function $F$ with label $J$ that is consistent with all answers given by $\Sigma_k$ (with the help of its subsimulators $\Sigma_{k-1,i}$ and $\Sigma_k^H$) so far.

Every comparison query made by $\Sigma_k^H$ was forwarded to the comparison oracle as a query about $J(F)$ with the same threshold, and received the same answer. Thus $J$ is consistent with all comparison answers received by $\Sigma_k^H$. By Lemma \ref{lem:herringbone-simulation}, there is a function $H$ with $J(H)=J$ that is consistent with all answers to sublabel queries $\lambda_i \overset{?}{\leq} t$ so far. Fix $\lambda_i=\lambda_i(H)$ for each $i$. 

For each $i$, every query $J(G_i)\overset{?}{\leq}t$ made by $\Sigma_{k-1,i}$ was answered by forwarding the query $\lambda_i\overset{?}{\leq}t-1$ to $\Sigma_k^H$. Therefore, $\lambda_i+1$ is consistent with all label comparison answers received by $\Sigma_{k-1,i}$.
By the induction hypothesis, there exists a function
$G_i'\in\FF_{k-1}$ with label $\lambda_i+1$ consistent with all evaluation answers returned by $\Sigma_{k-1,i}$. Choose such a function for each $i$, define the modified functions $G_i$ by (\ref{eq:Gdef}), and construct $F$ as in (\ref{eq:Fdef}).

The sublabel-sum condition for $H$ gives
\begin{equation*}
    J(G_i')+J(G_{i+1}') \geq 4m_k^2+2m_k+3 \quad\Longleftrightarrow\quad \lambda_i+\lambda_{i+1} \geq 4m_k^2+2m_k+1 \quad\Longleftrightarrow\quad i\leq j,
\end{equation*}
where $j$ is the switching index defined above. 
Thus condition (\ref{eq:Fcond2}) holds, so $F\in\FF_k$.
The label identity established above gives $J(F)=J(H)=J$.

Finally, consider any previous query $F(X,Y,Z)$.
Its target indices depend only on $(X,Y,Z)$, and each chosen component function reproduces the corresponding component answers used by $\Sigma_k$.
Consequently, $F(X,Y,Z)$ equals the answer previously returned by $\Sigma_k$. This proves consistency.

\bigskip

\bigskip

\textbf{Disclaimer of AI use and relation to the parallel results.}
ChatGPT-6 Astra was used to assist in developing the lower-bound argument, working out the technical details of the manuscript, preparing the figures, and checking the arguments.
The author takes full responsibility for the content of this paper.
In September 2026, the author independently obtained a deterministic lower bound of the same asymptotic form as those of Chen, Gao, Li, Rubinstein, and Yannakakis~\cite{CGLRY26} and Liu, Wang~\cite{LW26}.
The author subsequently strengthened this bound to the one presented in this paper.

\textbf{Acknowledgements.} The author would like to thank Dömötör Pálvölgyi for helpful discussions.

\bibliographystyle{plain}
\bibliography{bibliography}

\end{document}